\newtheorem{theorem}{Theorem}
\newtheorem{lemma}[theorem]{Lemma}
\theoremstyle{definition}
\newtheorem{defn}[theorem]{Definition}
\newcommand{\seq}[1]{\overline{#1}}
\def\operator@font{\sf}
\DeclareMathAlphabet{\mathtt}{OT1}{zi4}{m}{n}
\newcommand{\T}{\mathcal{T}}
\newcommand{\sem}[1]{\ensuremath{\llbracket #1 \rrbracket}}
\newcommand{\interp}[1]{\ensuremath{\T\sem{#1}}}
\newcommand{\intype}[1]{\ensuremath{\T^{\sf type}\sem{#1}}}
\newcommand{\interm}[1]{\ensuremath{\T^{\sf term}\sem{#1}}}
\newcommand{\intsch}[1]{\ensuremath{\T^{\sf scheme}\sem{#1}}}
\def\tr{\leadsto}
\def\pto{\rightharpoonup}
\newcommand{\secref}[1]{(\S\ref{sec:#1})}
\newcommand{\entails}{\ensuremath{\Vdash}}
\newcommand{\predh}[2]{\ensuremath{#1\ #2}}
\newcommand{\clause}[3]{\ensuremath{#1 : #2 \then #3}}
\newcommand{\qclause}[4]{\ensuremath{#1 : \forall #2. \: #3 \then #4}}
\newcommand{\then}{\ensuremath{\Rightarrow}}
\newcommand{\tuple}[1]{\langle #1 \rangle}
\newcommand{\fd}[3]{\ensuremath{#1 : #2 \tr #3}}
\newcommand{\nono}[1]{\ensuremath{\vdash non\text{-}overlapping(#1)}}
\newcommand{\cov}[1]{\ensuremath{\vdash covering(#1)}}
\newcommand{\restrict}[2]{\ensuremath{#1|_{#2}}}
\newcommand{\mkwd}[1]{\ensuremath{\text{\texttt{\underbar{#1}}}}}
\newcommand{\gr}[1]{\lfloor #1 \rfloor}
\newcommand{\oml}[0]{$\mathrm{H}^-$}
\newcommand{\omlhead}[0]{$\mathbf{H}^-$}
\newcommand{\De}{\ensuremath{\Delta}}
\newcommand{\trule}[1]{\textsc{(#1)}}
\newcommand{\erule}[1]{\textsc{\{#1\}}}
\newcommand{\I}[1]{\ensuremath{#1\!\!}~I}
\newcommand{\E}[1]{\ensuremath{#1\!\!}~E}
\newcommand{\Iforall}{\ensuremath{\forall\!}~I}
\newcommand{\Eforall}{\ensuremath{\forall\!}~E}
\newcommand{\isp}{\hspace{\infskip}}
\newcommand{\eql}[6]{\ensuremath{#1 \mid #2 \vdash_{#3} #4 \equiv #5: #6}}
\newcommand{\eqs}[3]{\eql{P}{\Gamma}{\Psi}{#1}{#2}{#3}}
\newcommand{\eqx}[3]{\ensuremath{\vdash_\Psi #1 \equiv #2: #3}}
\newcommand{\typ}[5]{\ensuremath{#1 \mid #2 \vdash_{#3} #4 : #5}}
\newcommand{\tys}[2]{\typ{P}{\Gamma}{A}{#1}{#2}}
\DeclareMathOperator{\dom}{dom}
\DeclareMathOperator{\fix}{fix}
\title{A Simple Semantics for Haskell Overloading}
\begin{document}

\maketitle

\begin{abstract}
  As originally proposed, type classes provide overloading and ad-hoc definition, but can still be
  understood (and implemented) in terms of strictly parametric calculi.  This is not true of
  subsequent extensions of type classes.  Functional dependencies and equality constraints allow the
  satisfiability of predicates to refine typing; this means that the interpretations of equivalent
  qualified types may not be interconvertible.  Overlapping instances and instance chains allow
  predicates to be satisfied without determining the implementations of their associated class
  methods, introducing truly non-parametric behavior.  We propose a new approach to the semantics of
  type classes, interpreting polymorphic expressions by the behavior of each of their ground
  instances, but without requiring that those behaviors be parametrically determined.  We argue that
  this approach both matches the intuitive meanings of qualified types and accurately models the
  behavior of programs.
\end{abstract}

\category{D.3.1}
         {Programming Languages}
         {Formal Definitions and Theory}
         [Semantics]
\category{F.3.2}
         {Logics and Meanings of Programs}
         {Semantics of Programming Lan\-guag\-es}
         [Denotational semantics]

\keywords overloading; type classes; semantics

\section{Introduction}\label{sec:introduction}

Implicit polymorphism (as provided by the Hindley-Milner type systems in ML and Haskell) provides a
balance between the safety guarantees provided by strong typing, and the convenience of generic
programming.  The Hindley-Milner type system is strong enough to guarantee that the evaluation of
well-typed terms will not get stuck, while polymorphism and principal types allow programmers to
reuse code and omit excessive type annotation.  Type classes~\cite{WadlerBlott89} play a similar
role for overloading: they preserve strong typing (ruling out run-time failures from the use of
overloaded symbols in undefined ways) without requiring that programmers explicitly disambiguate
overloaded expressions.  Since their introduction, type classes have seen numerous extensions, such
as multi-parameter type classes, functional dependencies~\cite{Jones00}, and overlapping
instances~\cite{PeytonJones97}; a variety of practical uses, from simple overloading to capturing
complex invariants and type-directed behavior; and, the adoption of similar approaches in other
strongly-typed programming languages, including Isabelle and Coq.

\subsection{Dictionary-Passing and its Disadvantages}

The semantics of type classes has primarily been given by translations from instance declarations
and (implicit) overloading to dictionaries and (explicit) dictionary arguments.  This parallels the
treatment of implicit polymorphism by translation to systems with explicit polymorphism (such as
System~F), and shares similar challenges.  For a simple example, in Haskell, the !map! function has
the polymorphic type scheme $(t \to u) \to [t] \to [u].$ In translating to System~F, this could be
interpreted as either
\[
  \forall t. \forall u. (t \to u) \to [t] \to [u] \quad \text{or} \quad
  \forall u. \forall t. (t \to u) \to [t] \to [u].
\]
But these types are not equivalent: they express different orders of passing type arguments.  There
are various ways of addressing this discrepancy: for example, Mitchell~\cite{Mitchell88} shows that,
for any two translations of an implicitly typed scheme, there is a term (which he calls a retyping
function) which transforms terms of one translation to terms of the other, while only manipulating
type abstractions and applications.  Similar issues arise in the semantics of type classes.  For
example, a function to compare pairs $(t, u)$ for equality could be given either the type scheme
\[
  (\predh{\mathtt{Eq}}{t}, \predh{\mathtt{Eq}}{u}) \then (t, u) \to (t, u) \to \mathtt{Bool}
\]
or the type scheme
\[
  (\predh{\mathtt{Eq}}{u}, \predh{\mathtt{Eq}}{t}) \then (t, u) \to (t, u) \to \mathtt{Bool}.
\]
In a dictionary-passing translation, type classes are interpreted by tuples, called dictionaries,
containing the type-specific implementations of each of the class methods.  Class instances
correspond to dictionary definitions, while predicates in types correspond to dictionary arguments.
In the case of the !Eq! class, which has equality and inequality methods, we could define !Eq!
dictionaries by
\[
  \mathtt{EqDict}\, t = (t \to t \to \mathtt{Bool}, t \to t \to \mathtt{Bool}).
\]
Even though the two types for pair equality above are equivalent in the implicitly overloaded
setting, their dictionary-passing translations are not: the first corresponds to a function of type
\[
  \mathtt{EqDict}\,t \to \mathtt{EqDict}\,u \to (t, u) \to (t, u) \to \mathtt{Bool},
\]
while the second corresponds to
\[
  \mathtt{EqDict}\,u \to \mathtt{EqDict}\,t \to (t, u) \to (t, u) \to \mathtt{Bool},
\]
Again, approaches exist to address this discrepancy: for example, Jones
shows~\cite{Jones93Coherence} that there are conversion functions, similar to Mitchell's retyping
functions, to convert between different translations of the same overloaded term.

Our own work began by exploring instance chains~\cite{Morris10}, a proposed extension to
Haskell-like type class systems.  In the course of this exploration, we discovered several
difficulties with existing approaches to the semantics of overloading.

\paragraph{Mismatch in expressivity.}

System~F typing is significantly more expressive than the Hindley-Milner type systems it is used to
model.  In particular, even within the translation of an ML or Haskell type scheme, there are
arbitrarily many expressions that do not correspond to any expressions of the source language.  The
problem is compounded when considering dictionary-passing translations of type classes.  For
example, there is no notion in Haskell of class instances depending on terms; on the other hand,
there is no difficulty in defining a term of type $\mathtt{Int} \to \mathtt{EqDict\,Int}$.  Uses of
such a term cannot be equivalent to any use of the methods of !Eq!.  As a consequence, there are
properties of source programs (for example, that any two instances of $\mathtt{=\!=}$ at the same
type are equal) that may not be provable of their dictionary-passing translation without reference
to the specific mechanisms of translation.

\paragraph{Predicates refine typing.}

\newcommand{\Elems}[2]{\predh{\mathtt{Elems}}{#1\,#2}}

Second, the notions of equivalence of System~F and Haskell types diverge once the satisfiability of
predicates can refine typing.  For example, functional dependencies allow programmers to declare
that some parameters of a class depend upon others; in the declaration
\begin{code}
class Elems c e | c -> e where
  empty :: c
  insert :: e -> c -> c
\end{code}
the dependency !c -> e! captures the intuition that the type of a container's elements are
determined by the type of the container. Concretely, given two predicates $\Elems{\tau}{\upsilon}$
and $\Elems{\tau'}{\upsilon'}$, if we know that $\tau = \tau'$, then we can conclude $\upsilon =
\upsilon'$.  This property is lost in the dictionary-passing translation.  Dictionaries for !Elems!
contain just their methods:
\[
  \mathtt{ElemsDict} \, c \, e = (c, e \to c \to c)
\]
As a consequence, there are types that are equivalent in Haskell, but are not interconvertible in
the dictionary-passing interpretation.  For example, the type
\(
  (\Elems{c}{e}, \Elems{c}{e'}) \then e \to e' \to c
\)
is equivalent to the (simpler) type
\(
  (\Elems{c}{e}) \then e \to e \to c
\)
as we must have that $e = e'$ for the qualifiers in the first type to be satisfiable.  However,
there is no corresponding bijection between terms of type
\(
  \mathtt{ElemsDict}\,c\,e \to \mathtt{ElemsDict}\,c\,e' \to e \to e' \to c
\)
and terms of type
\(
  \mathtt{ElemsDict}\,c\,e \to e \to e \to c.
\)
While we can construct a term of the second type given a term of the first, there is no parametric
construction of a term of the first type from a term of the second.

\paragraph{Non-parametric behavior.}

Finally, other extensions to class systems make it possible to define terms which have no
translation to parametric calculi.  For example, we could define a function !invBool! that negated
booleans and was the identity on all other types.  We begin by introducing a suitable class:
\begin{code}
class Univ t where
  invBool :: t -> t
\end{code}
There are several approaches to populating the class, using different extensions of the Haskell
class system.  Using overlapping instances~\cite{PeytonJones97}, we could simply provide the two
desired instances of the class, relying on the type checker to disambiguate them based on their
specificity:
\begin{code}
instance Univ Bool where
  invBool = not
instance Univ t where
  invBool = id
\end{code}
Using instance chains, we would specify the ordering directly:
\begin{code}
instance Univ Bool where
  invBool = not
else Univ t where
  invBool = id
\end{code}
% Finally, we could use GHC's new closed type family mechanism, albeit (we believe) requiring some
% indirection:
% \begin{code}
% type family IsBool t :: * where
%   IsBool Bool = True
%   IsBool False = False
% class Univ1 t b where
%   invBool1 :: b -> t -> t
% instance Univ1 Bool True where
%   invBool1 _ = not
% instance Univ1 t False where
%   invBool1 _ = id
% instance Univ1 t (IsBool t) => Univ t where
%   invBool = invBool1 undefined
% \end{code}
With either of these approaches, we might expect that the type of the class method !invBool! is
$(\predh{\mathtt{Univ}}{t}) \then t \to t.$ However, the predicate $\predh{\mathtt{Univ}}{\tau}$ is
provable for arbitrary types $\tau$.  Thus, the above type is intuitively equivalent to the
unqualified type $t \to t$; however, there is no term of that type in a purely parametric calculus
that has the behavior of method !invBool!.  (In practice, this is avoided by requiring that
!invBool!'s type still include the !Univ! predicate, even though it is satisfied in all possible
instantiations; while this avoids the difficulties in representing !invBool! in a parametric
calculus, it disconnects the meaning of qualified types from the satisfiability of their
predicates.)

\subsection{Specialization-based Semantics}

We propose an alternative approach to the semantics of type-class based implicit overloading.
Rather than interpret polymorphic expressions by terms in a calculus with higher-order polymorphism,
we will interpret them as type-indexed collections of (the interpretations of) monomorphic terms,
one for each possible ground instantiation of their type.  We call this a specialization-based
approach, as it relates polymorphic terms to each of their (ground-typed) specializations.  We
believe this approach has a number of advantages.
\begin{itemize}
\item First, our approach interprets predicates directly as restrictions of the instantiation of
  type variables, rather than through an intermediate translation.  Consequently, properties of the
  source language type system---such as the type refinement induced by the !Elems! predicates---are
  immediately reflected in the semantics, without requiring the introduction of coercions.
\item Second, our approach naturally supports non-parametric examples, such as class !Univ!, and
  avoids introducing artificial distinction between the semantics of expressions using parametric
  and ad-hoc polymorphism.
\item Third, because our approach does not need to encode overloading via dictionaries, it becomes
  possible to reason about class methods directly, rather than through reasoning about the
  collection of dictionaries defined in a program.
\end{itemize}
Our approach builds on Ohori's simple semantics for ML polymorphism~\cite{Ohori89}, extended by
Harrison to support polymorphic recursion~\cite{Harrison05}.

In this paper, we introduce a simple overloaded language called \oml{}~\secref{oml-eqn}, and give
typing and equality judgments in the presence of classes and class methods.  We apply our
specialization-based approach to give a denotational semantics of \oml{}~\secref{sem-over}, and show
the soundness of typing and equality with respect to the denotational semantics~\secref{formal}.  We
also develop two examples, to demonstrate the advantages of our approach.  First, we consider a pair
of definitions, one parametric and the other ad-hoc, defining operational equivalent terms.  We show
that the defined terms are related by our equality judgment~\secref{poly-id-eqn} and have the same
denotations~\secref{poly-id-den}.  This demonstrates the flexibility of our approach, and the
ability to reason about class methods directly (the second and third advantages listed above).
Second, we extend \oml{} with functional dependencies~\secref{fundeps}, and establish the soundness
of the (extended) typing and equality judgments, all without having to augment the models of terms.
This demonstrates the extensibility of our approach, and the close connection between properties of
source terms and properties of their denotations (the first advantage listed above).

\section{The \omlhead{} Language}\label{sec:oml-eqn}

\begin{figure}
\[\begin{array}{ll@{\hspace{7mm}}ll}
  \text{Term variable} & x \in Var & \text{Term constants} & k \\
  \text{Type variables} & t \in TVar & \text{Type constants} & K \\
  \text{Class names} & C & \text{Instance names} & d \in InstName \\
\end{array}\]
\[\begin{array}{lrr@{\hspace{5px}}c@{\hspace{5px}}l}
  \multicolumn{2}{l}{\text{Types}} & \tau,\upsilon & ::= & t \mid K \mid \tau \to \tau \\
  \text{Predicates} & \multicolumn{2}{r}{Pred \ni \pi} & ::= & \predh{C}{\seq\tau} \\
  \multicolumn{2}{l}{\text{Contexts}} & P,Q & ::= & \seq\pi \\
  \multicolumn{2}{l}{\text{Qualified types}} & \rho & ::= & \tau \mid \pi \then \rho \\
  \text{Type schemes} & \multicolumn{2}{r}{Scheme \ni \sigma} & ::= & \rho \mid \forall t. \sigma \\
  \text{Expressions} & \multicolumn{2}{r}{Expr \ni M,N} & ::= & x \mid k \mid \lambda x. M \mid M\,N \\
   & & &  \mid & \mu x.M \mid \mkwd{let} \; x = M \; \mkwd{in} \; N \\
  \text{Class axioms} & \multicolumn{2}{r}{Axiom \ni \alpha} & ::= & \qclause{d}{\seq{t}}{P}{\pi} \\
  \multicolumn{2}{l}{\text{Axiom sets}} & A & \subset & Axiom \\
  \text{Methods:} \\
  \multicolumn{2}{l}{\text{\ \ Signatures}} & Si & \in & Var \pto Pred \times Scheme \\
  \multicolumn{2}{l}{\text{\ \ Implementations}} & Im & \in & InstName \times Var \pto Expr \\
  \multicolumn{2}{l}{\text{Class contexts}} & \Psi & ::= & \tuple{A, Si, Im}
\end{array}\]
\caption{Types and terms of \oml{}.}\label{fig:types-terms-over}
\end{figure}

Figure~\ref{fig:types-terms-over} gives the types and terms of \oml{}; we write $\seq x$ to denote a
(possibly empty) sequence of $x$'s, and if $\pi$ is a predicate $\predh{C}{\seq\tau}$, we will
sometimes write $class(\pi)$ for $C$.  As in Jones's theory of qualified types~\cite{Jones92}, the
typical Hindley-Milner types are extended with qualified types $\rho$, capturing the use of
predicates.  We must also account for the definition of classes and their methods.  One approach
would be to expand the grammar of expressions to include class and instance declarations; such an
approach is taken in Wadler and Blott's original presentation~\cite{WadlerBlott89}.  However, this
approach makes such definitions local, in contrast to the global nature of subsequent type class
systems (such as that of Haskell), and introduces problems with principal typing (as Wadler and
Blott indicate in their discussion).  We take an alternative approach, introducing new top level
constructs (axioms~$A$, method signatures~$Si$, and method implementations~$Im$) to model class and
instance declarations.  We refer to tuples of top level information as class contexts $\Psi$, and
will give versions of both our typing and semantic judgments parameterized by such class contexts.
Note that this leaves implicit many syntactic restrictions that would be present in a full language,
such as the requirement that each instance declaration provide a complete set of method
implementations.

\subsection{\omlhead{} Typing}

\begin{figure}
\begin{gather*}
\infbox{\irule[\trule{Var}]
              {(x : \sigma) \in \Gamma};
              {\tys{x}{\sigma}}}
\isp
\infbox{\irule[\trule{$\I\to$}]
              {\typ{P}{\Gamma,x:\tau}{A}{M}{\tau'}};
              {\tys{(\lambda x.M)}{\tau \to \tau'}}}
\\
\infbox{\irule[\trule{$\E\to$}]
              {\tys{M}{\tau \to \tau'}}
              {\tys{N}{\tau}};
              {\tys{(M\,N)}{\tau'}}}
\\
\infbox{\irule[\trule{$\mu$}]
              {\typ{P}{\Gamma,x:\tau}{A}{M}{\tau}};
              {\tys{\mu x.M}{\tau}}}
\isp
\infbox{\irule[\trule{$\I\then$}]
              {\typ{P,\pi}{\Gamma}{A}{M}{\rho}};
              {\tys{M}{\pi \then \rho}}}
\\
\infbox{\irule[\trule{$\E\then$}]
              {\tys{M}{\pi \then \rho}}
              {P \entails_A \pi};
              {\tys{M}{\rho}}}
\\
\infbox{\irule[\trule{\Iforall}]
              {\tys{M}{\sigma}}
              {t \not\in ftv(\Gamma,P)};
              {\tys{M}{\forall t. \sigma}}}
\isp
\infbox{\irule[\trule{\Eforall}]
              {\tys{M}{\forall t. \sigma}};
              {\tys{M}{[\tau/t]\sigma}}}
\\
\infbox{\irule[\trule{Let}]
              {\tys{M}{\sigma}}
              {\typ{P}{\Gamma,x:\sigma}{A}{N}{\tau}};
              {\tys{(\mkwd{let}\;x = M\;\mkwd{in}\;N)}{\tau}}}
\end{gather*}
\caption{Expression typing rules of \oml{}.}\label{fig:expr-typing-oml}
\end{figure}

We begin with the typing of \oml{} expressions; our expression language differs from Jones's only in
the introduction of $\mu$ (providing recursion) .  Typing judgments take the form
\[
  \tys{M}{\sigma},
\]
where $P$ is a set of predicates restricting the type variables in $\Gamma$ and $\sigma$, and $A$ is
the set of class axioms (the latter is the only significant difference between our type system and
Jones's).  The typing rules for \oml{} expressions are given in Figure~\ref{fig:expr-typing-oml}.
We write $ftv(\tau)$ for the free type variables in $\tau$, and extend $ftv$ to predicates $\pi$,
contexts~$P$, and environments~$\Gamma$ in the expected fashion.  Rules \trule{$\then$~I} and
\trule{$\then$~E} describe the interaction between the predicate context~$P$ and qualified
types~$\rho$.  Otherwise, the rules are minimally changed from the corresponding typing rules of
most Hindley-Milner systems.

\begin{figure}
\begin{gather*}
\infbox{\irule[\trule{Assume}]
              {\pi \in P};
              {\text{-}: P \entails_A \pi}}
\\
\infbox{\irule[\trule{Axiom}]
              {(d: \forall \overline t. Q' \then \pi') \in A}
              {S \, \pi' = \pi}
              {P \entails_A S\,Q'};
              {d: P \entails_A \pi}}
\end{gather*}
\caption{Predicate entailment rules of \oml{}.}\label{fig:pred-entail}
\end{figure}

We continue with the rules for predicate entailment in \oml{}, given in
Figure~\ref{fig:pred-entail}.  The judgment $d: P \entails_A \pi$ denotes that the axiom named $d$
proves predicate $\pi$, given assumptions~$P$ and class axioms~$A$.  We use a dummy instance name,
written $\text{-}$, in the case that the goal is one of the assumptions.  We will omit the instance
name if (as in the typing rules) the particular instance used is irrelevant.  We write $P \entails_A
Q$ if there are $d_1 \dots d_n$ such that $d_i : P \entails Q_i$, and $\entails_A P$ to abbreviate
$\emptyset \entails_A P$.  Our entailment relation differs from Jones's entailment relation for type
classes and from our prior systems~\cite{Morris10} in two respects.  First, our system is
intentionally simplified (for example, we omit superclasses and instance chains).  Second, we do not
attempt to capture all the information that would be necessary for an dictionary-passing
translation; we will show that having just the first instance name is sufficient to determine the
meanings of overloaded expressions.

In the source code of a Haskell program, type class methods are specified in class and instance
declarations, such as the following:
\begin{code}
class Eq t where (==) :: t -> t -> Bool
instance Eq t => Eq [t] where xs == ys = ...
\end{code}
We partition the information in the class and instance declarations into class context tuples
$\tuple{A,Si,Im}$.  The logical content is captured by the axioms~$A$; in this example, we would
expect that there would be some instance name $d$ such that
\[
  (\qclause{d}{t}{\predh{\mathtt{Eq}}{t}}{\predh{\mathtt{Eq}}{[t]}}) \in A.
\]
Haskell's concrete syntax does not name instances; for our purposes, we assume that suitable
identifiers are generated automatically.  The method signatures are captured in the mapping $Si$; we
distinguish the class in which the method is defined (along with the corresponding type variables)
from the remainder of the method's type scheme.  For this example, we would have
\[
   Si(=\!=) = \tuple{\predh{\mathtt{Eq}}{t}, t \to t \to \mathtt{Bool}}.
\]
Note that we have not quantified over the variables appearing in the class predicate, nor included
the class predicate in the type scheme $t \to t \to \mathtt{Bool}$.  Each predicate in the range of
$Si$ will be of the form $\predh{C}{\seq{t}}$ for some class $C$ and type variables $\seq{t}$, as
they arise from class definitions.  The type scheme of a class member may quantify over variables or
include predicates beyond those used in the class itself.  For example, the !Monad! class has the
following definition:
\begin{code}
class Monad m where
  return :: a -> m a
  (>>=)  :: m a -> (a -> m b) -> m b
\end{code}
Note that the variable !a! in the type of !return! is not part of the !Monad! constraint.  Thus, we
would have that
\[
  Si(\mathtt{return}) = \tuple{\predh{\mathtt{Monad}}{m}, \forall a. a \to m\,a}.
\]
The method implementations themselves are recorded in component $Im$, which maps pairs of method and
instance names to implementing expressions.

To describe the typing of methods and method implementations, we begin by describing the type of
each method implementation.  This is a combination of the defining instance, including its context,
and the definition of the method itself.  For example, in the instance above, the body of the $=\!=$
method should compare lists of arbitrary type !t! for equality (this arises from the instance
predicate !Eq [t]! and the signature of $=\!=$), given the assumption !Eq t!  (arising from the
defining instance).  That is, we would expect it to have the type
\[
  \forall t. \predh{\mathtt{Eq}}{t} \then [t] \to [t] \to \mathtt{Bool}.
\]

We introduce abbreviations for the type scheme of each method, in general and at each instance,
assuming some class context $\tuple{A, Si, Im}$.  For each method name $x$ such that $Si(x) =
\tuple{\pi, \forall \seq{u}. \rho}$, we define the type scheme for $x$ by:
\[
  \sigma_x = \forall \seq{t}. \forall \seq{u}. \: \pi \then \rho,
\]
or, equivalently, writing $\rho$ as $Q \then \tau$:
\[
  \sigma_x = \forall \seq{t},\seq{u}. \: (\pi,Q) \then \tau
\]
where, in each case, $\seq{t} = ftv(\pi)$.  Similarly, for each method $x$ as above, and each
instance $d$ such that
\begin{itemize}
\item  $\tuple{x,d} \in \dom(Im)$;
\item  $(\qclause{d}{\seq{t}}{P}{\pi'}) \in A$; and,
\item there is some substitution $S$ such that $S\,\pi = \pi'$
\end{itemize}
we define the type scheme for $x$ in $d$ by:
\[
  \sigma_{x,d} = \forall \seq{t}, \seq{u}. \: (P, S\,Q) \then S\,\tau.
\]

\begin{figure}
\begin{gather*}
\infbox{\irule[\trule{Ctxt}]
              {\!\!\!\!
               \begin{array}{c}
                 \{ \pi \nsim \pi' \mid (d: P \then \pi), (d': P' \then \pi') \in A \}
                 \\[1px]
                 \{ (\typ{P}{\Gamma, \overline{x_i : \sigma_{x_i}}}{A}{Im(y,d)}{\sigma_{y,d}}) \mid \tuple{y,d} \in \dom(Im) \}
                 \\[1px]
                 \typ{P}{\Gamma, \overline{x_i : \sigma_{x_i}}}{A}{M}{\sigma}
               \end{array}};
              {\typ{P}{\Gamma}{\tuple{A,Si,Im}}{M}{\sigma}}}
\end{gather*}
\caption{\oml{} typing with class contexts.}\label{fig:prog-typing-oml}
\end{figure}

Finally, we give a typing rule parameterized by class contexts in Figure~\ref{fig:prog-typing-oml};
in $\overline{x_i:\sigma_{x_i}}$, the $x_i$ range over all methods defined in the program (i.e., over
  the domain of $Si$).  Intuitively, an expression $M$ has type $\tau$ under $\tuple{A,Si,Im}$ if:
\begin{itemize}
\item None of the class instances overlap.  More expressive class systems will require more
  elaborate restrictions; we give an example when extending \oml{} to support functional
  dependencies~\secref{fundeps}.
\item Each method implementation $Im(x,d)$ has the type $\sigma_{x,d}$ (methods are allowed to be
  mutually recursive).
\item The main expression has the declared type $\sigma$, given that each class method $x_i$ has
  type $\sigma_{x_i}$.
\end{itemize}

\subsection{Equality of \omlhead{} Terms}

In this section, we give an axiomatic presentation of equality for \oml{} terms.  Our primary
concerns are the treatment of polymorphism and class methods; otherwise, \oml{} differs little from
standard functional calculi.  As described in the introduction, our intention is to permit reasoning
about class methods directly, without relying on either a dictionary-passing translation or a
preliminary inlining step that resolves all method overloading.  This results in two unusual aspects
of our rules:
\begin{itemize}
\item While our presentation gives equality for expressions, it relies critically on components of
  the class context $\tuple{A,Si,Im}$---the axioms~$A$ to determine which instance solves given
  constraints, and the method implementations $Im$ to determine the behavior of methods.
\item The treatment of polymorphism cannot be completely parametric, and different equalities may be
  provable for the same term at different types; for example, we cannot hope to have uniform proofs
  of properties of the $=\!=$ method when it is defined differently at different types.
\end{itemize}

\begin{figure}
\begin{gather*}
%% \infbox{\irule[\erule{Refl}]
%%               {P \mid \Gamma \vdash_\Psi M: \sigma};
%%               {\eqs{M}{M}{\sigma}}}
%% \hspace{\infskip}
%% \infbox{\irule[\erule{Sym}]
%%               {\eqs{N}{M}{\sigma}};
%%               {\eqs{M}{N}{\sigma}}}
%% \\
%% \infbox{\irule[\erule{Trans}]
%%               {\eqs{M}{N}{\sigma}}
%%               {\eqs{N}{O}{\sigma}};
%%               {\eqs{M}{O}{\sigma}}}
%% \\
%% \infbox{\irule[\erule{Cong}]
%%               {\eqs{M}{M'}{\tau \to \tau'}}
%%               {\eqs{N}{N'}{\tau}};
%%               {\eqs{M\,N}{M'\,N'}{\tau'}}}
%% \\
\infbox{\irule[\erule{$\beta$}]
              {\typ{P}{\Gamma,x:\tau}{\Psi}{M}{\tau'}}
              {\typ{P}{\Gamma}{\Psi}{N}{\tau}};
              {\eqs{(\lambda x.M) N}{[N/x]M}{\tau'}}}
\\
\infbox{\irule[\erule{$\eta$}]
              {\typ{P}{\Gamma}{\Psi}{M}{\tau \to \tau'}}
              {x \not\in fv(M)};
              {\eqs{\lambda x.(M x)}{M}{\tau \to \tau'}}}
%% \\
%% \infbox{\irule[\erule{$\xi$}]
%%               {\eql{P}{\Gamma,x:\tau}{\Psi}{M}{N}{\tau'}};
%%               {\eqs{\lambda x. M}{\lambda x.N}{\tau \to \tau'}}}
\\
\infbox{\irule[\erule{$\mu$}]
              {\typ{P}{\Gamma,x:\tau}{\Psi}{M}{\tau}};
              {\eqs{\mu x.M}{[\mu x.M/x]M}{\tau}}}
\\
\infbox{\irule[\erule{Let}]
              {\typ{P}{\Gamma}{\Psi}{M}{\sigma}}
              {\typ{P}{\Gamma,x:\sigma}{\Psi}{N}{\tau}};
              {\eqs{(\mkwd{let}\;x = M\;\mkwd{in}\;N)}{[M/x]N}{\tau}}}
\\
\infbox{\irule[\erule{Method}]
              {Si(x) = \tuple{\pi,\sigma}}
              {d: P \entails S\,\pi};
              {\eql{P}{\Gamma}{\tuple{A,Si,Im}}{x}{Im(x,d)}{S\,\sigma}}}
\\
\infbox{\irule[\erule{\Iforall}]
              {t \not\in ftv(P,\Gamma)}
              {\{ (\eqs{M}{N}{[\tau/t]\sigma}) \mid \tau \in GType \}};
              {\eqs{M}{N}{\forall t. \sigma}}}
\\
\infbox{\irule[\erule{\Eforall}]
              {\eqs{M}{N}{\forall t. \sigma}};
              {\eqs{M}{N}{[\tau/t] \sigma}}}
\\
\infbox{\irule[\erule{$\I\then$}]
              {\eql{P,\pi}{\Gamma}{\Psi}{M}{N}{\rho}};
              {\eqs{M}{N}{\pi \then \rho}}}
\\
\infbox{\irule[\erule{$\E\then$}]
              {\eqs{M}{N}{\pi \then \rho}}
              {P \entails \pi};
              {\eqs{M}{N}{\rho}}}
\end{gather*}
\caption{Equality for \oml{} terms.}\label{fig:oml-eqn}
\end{figure}

Equality judgments take the form $\eqs{M}{N}{\sigma}$, denoting that, assuming predicates~$P$,
variables typed as in $\Gamma$, and class context $\Psi$, expressions $M$ and $N$ are equal at type
$\sigma$.  To simplify the presentation, we have omitted equational assumptions; however, extending
our system with assumptions and a corresponding axiom rule would be trivial.  The rules are those
listed in Figure~\ref{fig:oml-eqn}, together with rules for reflexivity, symmetry, and transitivity
of equality, and the expected $\alpha$-equivalence and congruence rules for each syntactic form.
Rules \erule{$\beta$}, \erule{$\eta$}, \erule{$\mu$} and \erule{Let} should be unsurprising.  Rules
\erule{\I\then} and \erule{\E\then} mirror the corresponding typing rules, assuring that we can only
conclude equalities about well-typed expressions.  Rule \erule{\Eforall} should also be
unsurprising: if we have proved that two expressions are equal at a quantified type, we have that
they are equal at any of its instances.  Rule \erule{\Iforall} is less typical, as it requires one
subproof for each possible ground type ($GType$ ranges over ground type expressions).  Note that
this is only non-trivial for terms involving overloading.  Finally, rule \erule{Method} provides
(one step of) method resolution.  Intuitively, it says that for some class method $x$ at type
$\sigma$, if instance $d$ proves that $x$ is defined at $\sigma$, then $x$ is equal to the
implementation of $x$ provided by instance $d$.

\subsection{Polymorphic Identity Functions}\label{sec:poly-id-eqn}

In the introduction, we gave an example of a polymorphic function (!invBool!) that could be
instantiated at all types, yet did not have parametric behavior.  In this section, we will consider
a function which does have parametric behavior, but is defined in an ad-hoc fashion.  We will
demonstrate that our treatment of equality allows us to conclude that it is equal to its parametric
equivalent.

Our particular example is the identity function.  First, we give its typical definition:
\begin{code}
id1 :: t -> t
id1 x = x
\end{code}
For our second approach, we intend an overloaded definition that is provably equal to the parametric
definition.  We could produce such a definition using instance chains:
\begin{code}
class Id2' t where
  id2' :: t -> t
instance (Id2' t, Id2' u) => Id2' (t -> u) where
  id2' f = id2' . f . id2'
else Id2' t where
  id2' x = x
\end{code}
This gives an ad-hoc definition of the identity function, defined at all types but defined
differently for function and non-function types. Reasoning about this definition would require
extending the entailment relation to instance chains, introducing significant additional complexity.
We present simpler instances, but restrict the domain of types to achieve a similar result.
\begin{code}
class Id2 t where
  id2 :: t -> t
instance Id2 Int where
  id2 x = x
instance (Id2 t, Id2 u) => Id2 (t -> u) where
  id2 f = id2 . f . id2
\end{code}
We will use !Int! to stand in for all base (non-function) types.

It should be intuitive that, while they are defined differently, !id1 x! and !id2 x! should each
evaluate to !x! for any integer or function on integers !x!.  Correspondingly, given a class context
$\Psi$ that describes (at least) !Id2!, we can prove that \eqx{\mathtt{id1}}{\mathtt{id2}}{\tau} (we
omit the empty context and empty assumptions) for any such type $\tau$.  The case for integers is
direct: one application of \erule{Method} is sufficient to prove $\eqx{\mathtt{id2}}{\lambda
  x.x}{\mathtt{Int} \to \mathtt{Int}}$.  For functions of (functions of$\dots$) integers, the proof
has more steps, but is no more complicated.  For the simplest example, to show that
\[
 \eqx{\mathtt{id2}}{\lambda x.x}{(\mathtt{Int} \to \mathtt{Int}) \to (\mathtt{Int} \to \mathtt{Int})},
\]
we use \erule{Method} to show
\[
  \eqx{\mathtt{id2}}{\lambda f. (\mathtt{id2} \circ f \circ \mathtt{id2})}{(\mathtt{Int} \to \mathtt{Int}) \to (\mathtt{Int} \to \mathtt{Int})}.
\]
Relying on the usual definition of composition and \erule{$\beta$}, we show
\begin{multline*}
  \vdash_\Psi \lambda f. (\mathtt{id2} \circ f \circ \mathtt{id2}) \equiv
    \lambda f. \lambda x. \mathtt{id2} (f (\mathtt{id2}\,x)): \\
    (\mathtt{Int} \to \mathtt{Int}) \to (\mathtt{Int} \to \mathtt{Int})
\end{multline*}
Finally, by two uses of \erule{Method} for !id2! on integers, and \erule{$\eta$}, we have
\[
  \eqx{\lambda f. \lambda x. \mathtt{id2} (f (\mathtt{id2}\,x))}
      {\lambda f. f}
      {(\mathtt{Int} \to \mathtt{Int}) \to (\mathtt{Int} \to \mathtt{Int})}
\]
and thus the desired result.

We cannot expect to prove that $\mathtt{id1} \equiv \mathtt{id2}$ at all types (i.e.,
$\eqx{\mathtt{id1}}{\mathtt{id2}}{\forall t. t \to t}$) without limiting the domain of types.  For
example, there is no instance of !Id2! at type !Bool!; therefore, we cannot prove any non-trivial
equalities $\eqx{\mathtt{id2}}{M}{\mathtt{Bool} \to \mathtt{Bool}}$.  However, if we were to
restrict the grammar of types to those types for which !Id2! is defined (that is, if we define that
$\tau ::= \mathtt{Int} \mid \tau \to \tau$), then we could construct such an argument.  To show that
\(
  \eqx{\mathtt{id2}}{\lambda x.x}{\forall t. t \to t},
\)
we begin by applying \erule{\Eforall}, requiring a derivation
\(
  \eqx{\mathtt{id2}}{\lambda x.x}{\tau \to \tau}
\)
for each ground type $\tau$.  We could construct such a set of derivations by induction on the
structure of types, using the argument for !Int! above as the base case, and a construction
following the one for $\mathtt{Int} \to \mathtt{Int}$ for the inductive case.

A similar approach applies to the formulation using instance chains (class !Id2'!): we could show
that the first clause applied to functions, the second clause applied to any non-function type, and
use induction over the structure of types with those cases.

\section{A Simple Semantics for Overloading}\label{sec:sem-over}

Next, we develop a simple denotational semantics of \oml{} programs, extending an approach
originally proposed by Ohori~\cite{Ohori89} to describe the implicit polymorphism of ML.  As with
the presentation of equality in the previous section, the primary new challenges arise from the
definition of class methods and the treatment of overloading.  We will demonstrate that the
specialization-based approach is well-suited to addressing both challenges.  In particular, it
allows expressions to have different interpretations at each ground type without introducing
additional arguments or otherwise distinguishing qualified from unqualified type schemes.

\subsection{The Meaning of Qualified Types}\label{sec:meaning-qualified-types}

To describe the meaning of overloaded expressions, we must begin with the meaning of qualified
types.  Intuitively, qualifiers in types can be viewed as predicates in set comprehensions---that
is, a class !Eq! denotes a set of types, and the qualified type
\(
  \forall t. \predh{\mathtt{Eq}}{t} \then t \to t \to \mathtt{Bool}
\)
describes the set of types
\(
  \{ t \to t \to \mathtt{Bool} \mid t \in \mathtt{Eq} \}.
\)
However, most existing approaches to the semantics of overloading do not interpret qualifiers in
this fashion: Wadler and Blott~\cite{WadlerBlott89}, for instance, translate qualifiers into
dictionary arguments, while Jones~\cite{Jones92} translates qualified types into a calculus with
explicit evidence abstraction and application.

Our approach, by contrast, preserves the intuitive notion of qualifiers.  Given some class context
$\Psi = \tuple{A,Si,Im}$, we define the ground instances $\gr{\sigma}_\Psi$ of an \oml{} type scheme
$\sigma$ by:
\begin{align*}
  \gr\tau_\Psi &= \{ \tau \} \\
  \gr{\pi \then \rho}_\Psi &=
    \begin{cases}
      \gr\rho_\Psi &\text{if $\entails_A \pi$} \\
      \emptyset &\text{otherwise}
    \end{cases} \\
  \gr{\forall t. \sigma}_\Psi &= \bigcup_{\tau \in GType} \gr{[\tau/t]\sigma}_\Psi.
\end{align*}
Equivalently, if we define $GSubst(\seq t)$ to be substitutions that map $t$ to ground types and are
otherwise the identity, we have
\[
  \gr{\forall \seq{t}. P \then \tau}_\Psi = \{ S\,\tau \mid S \in GSubst(\seq t, \entails_A S\,P \}.
\]
We will omit annotation $\Psi$ when it is unambiguous.

In the typing judgments for \oml{}, predicates can appear in both types and contexts.  To account
for both sources of predicates, we adopt Jones's constrained type schemes $(P \mid \sigma)$, where
$P$ is a list of predicates and $\sigma$ is an \oml{} type scheme; an unconstrained type scheme
$\sigma$ can be treated as the constrained scheme $(\emptyset \mid \sigma)$ (as an empty set of
predicates places no restrictions on the instantiation of the variables in $\sigma$).  We can
define the ground instances of constrained type schemes by a straightforward extension of the
definition for unconstrained schemes: if $\Psi = \tuple{A,Si,Im}$, then
\[
  \gr{(P \mid \forall \seq{t}. Q \then \tau)}_\Psi = \{ S\,\tau \mid S \in GSubst(\seq t),\entails_A (P, S\,Q) \}.
\]

\subsection{Type Frames for Polymorphism}

We intend to give a semantics for \oml{} expressions by giving a mapping from their typing
derivations to type-indexed collections of monomorphic behavior.  We begin by fixing a suitable
domain for the monomorphic behaviors.  Ohori assumed an underlying type-frame semantics; his
translations, then, were from implicitly polymorphic terms to the interpretations of terms in the
simply-typed $\lambda$-calculus.  Unfortunately, we cannot apply his approach without some
extension, as type classes are sufficient to encode polymorphic recursion.  However, we can adopt
Harrison's extension~\cite{Harrison05} of Ohori's approach, originally proposed to capture
polymorphic recursion, and thus also sufficient for type class methods.

We begin by defining \emph{PCPO frames}, an extension of the standard notion of type frames.  A PCPO
frame is a tuple
\[
  \T = \tuple{\intype\cdot,\interm\cdot,T_{\tau,\upsilon},\sqsubseteq_\tau,\sqcup_\tau,\bot_\tau},
\]
(where we will omit the \textsf{type} and \textsf{term} annotations when they are apparent from
context) subject to the following six conditions.
\begin{enumerate}
\item For each ground type $\tau$, $\intype{\tau}$ is a non-empty set providing the interpretation
  of $\tau$.
\item For each typing derivation $\De$ of $\Gamma \vdash M : \tau$ and $\Gamma$-compatible
  environment $\eta$, $\interm{\De}\eta$ is the interpretation of $M$ in $\intype\tau$.
\item $T_{\tau,\upsilon} : \intype{\tau \to \upsilon} \times \intype{\tau} \to \intype{\upsilon}$
  provides the interpretation of the application of an element of $\tau \to \upsilon$ to an element
  of $\tau$.
\item For any $f,g \in \intype{\tau \to \upsilon}$, if, for all $x \in \intype\tau$,
  $T_{\tau,\upsilon}(f, x) = T_{\tau,\upsilon}(g, x)$, then $f = g$.
\item $\interm\cdot$ and $T_{\tau,\upsilon}$ respect the semantics of the simply-typed
  $\lambda$-calculus.  In particular:
  \begin{itemize}
  \item If $\De$ derives $\Gamma \vdash x : \tau$, then $\interp\De\eta = \eta(x)$;
  \item If $\De$ derives $\Gamma \vdash M\,N : \upsilon$, $\De_M$ derives $\Gamma \vdash M : \tau
    \to \upsilon$ and $\De_N$ derives $\Gamma \vdash N : \tau$, then $\interp\De\eta =
    T_{\tau,\upsilon}(\interp{\De_M}\eta,\interp{\De_N}\eta)$; and,
  \item If $\De_\lambda$ derives $\Gamma \vdash \lambda x: \tau.M : \tau \to \upsilon$
    and $\De_M$ derives $\Gamma,x:\tau \vdash M: \upsilon$, then
    $T_{\tau,\upsilon}(\interp{\De_\lambda}\eta, d) = \interp{\De_M}(\eta[x \mapsto d])$.
  \end{itemize}
\item Each set $\interp\tau$ is a PCPO with respect to $\sqsubseteq_\tau$, $\sqcup_\tau$ and
  $\bot_\tau$.
\end{enumerate}
The first five conditions are the standard requirements for type frames; the final condition relates
the type frame and PCPO structures of a PCPO frame.  Given a PCPO frame $\T$, we can define the
interpretation of a polymorphic type scheme $\sigma$ as the mappings from the ground instances
$\tau$ of $\sigma$ to elements of $\interp\tau$.  That is:
\[
  \intsch{\sigma}_\Psi = \Pi (\tau \in \gr\sigma_\Psi). \intype\tau.
\]
where we will omit the \textsf{scheme} and $\Psi$ annotations when it is not ambiguous.  For
example, the identity function $\lambda x. x$ has the type scheme $\forall t. t \to t$.  Therefore,
the semantics of the identity function is a map from the ground instances of its type (i.e., the
types $\tau \to \tau$) to the semantics of the simply-typed identity function at each type.  We
would expect its semantics to include the pair
\[
   \tuple{\mathtt{Int \to Int}, \interm{\vdash \lambda x : \mathtt{Int}. x: \mathtt{Int \to Int}}}
\]
to account for the $\mathtt{Int \to Int}$ ground instance of its type scheme, the pair
\[
   \tuple{\mathtt{Bool \to Bool}, \interm{\vdash \lambda x : \mathtt{Bool}. x: \mathtt{Bool \to Bool}}}
\]
to account for the $\mathtt{Bool \to Bool}$ ground instance of its type scheme, and so forth.  Note
that if $\sigma$ has no quantifiers, and so $\gr\sigma_\Psi = \{ \tau \}$ for some type $\tau$, then we
have
\[
  \intsch\sigma_\Psi = \{ \{ \tuple{\tau, b} \} \mid b \in \intype\tau \},
\]
and so an element of $\intsch{\tau}$ is a singleton map, not an element of $\intype\tau$.  Harrison
proves that $\interp\sigma$ is itself a pointed CPO, justifying solving recursive equations in
$\interp\sigma$.
\begin{theorem}[Harrison]
  Let $\T$ be a PCPO frame.  Then, for any type scheme $\sigma$, $\interp\sigma$ is a pointed CPO
  where:
  \begin{itemize}
  \item For any $f,g \in \interp\sigma$, $f \sqsubseteq_\sigma g \iff (\forall \tau \in
    \gr\sigma. \: f(\tau) \sqsubseteq_\tau g(\tau))$;
  \item The bottom element $\bot_\sigma$ is defined to be $\{ \tuple{\tau, \bot_\tau} \mid \tau \in
    \gr\sigma \}$; and,
  \item The least upper bound of an ascending chain $\{ f_i \} \subseteq \interp\sigma$ is
    $\{\tuple{\tau,u_\tau} \mid \tau \in \gr\sigma, u_\tau = \sqcup_\tau (f_i(\tau))\}$.
  \end{itemize}
\end{theorem}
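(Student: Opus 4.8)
The plan is to recognize this as the standard fact that a (dependent) product of pointed CPOs, ordered coordinatewise, is again a pointed CPO, with the bottom element and all suprema computed pointwise. By condition~6 of the definition of a PCPO frame, each factor $\intype\tau$ is already a PCPO with respect to $\sqsubseteq_\tau$, $\sqcup_\tau$ and $\bot_\tau$; since $\interp\sigma = \Pi(\tau \in \gr\sigma).\intype\tau$ is exactly the product of these factors indexed by the ground instances of $\sigma$, the whole argument amounts to lifting each piece of PCPO structure through the product and checking that the lifted structure still satisfies the CPO axioms. I would verify the three bullet points in turn, treating the defining equation for $\sqsubseteq_\sigma$ as the pointwise order and then confirming that the claimed $\bot_\sigma$ and chain-suprema behave as required.

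First I would check that $\sqsubseteq_\sigma$ is a partial order on $\interp\sigma$. Reflexivity and transitivity are immediate from the pointwise definition together with the corresponding properties of each $\sqsubseteq_\tau$. For antisymmetry, suppose $f \sqsubseteq_\sigma g$ and $g \sqsubseteq_\sigma f$; then for every $\tau \in \gr\sigma$ we have $f(\tau) \sqsubseteq_\tau g(\tau)$ and $g(\tau) \sqsubseteq_\tau f(\tau)$, so antisymmetry of $\sqsubseteq_\tau$ gives $f(\tau) = g(\tau)$. Since the elements of $\interp\sigma$ are maps with domain $\gr\sigma$, agreement at every $\tau$ yields $f = g$. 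Next I would confirm that $\bot_\sigma = \{\tuple{\tau,\bot_\tau} \mid \tau \in \gr\sigma\}$ is a genuine element of $\interp\sigma$ (each $\bot_\tau \in \intype\tau$ by condition~6) and is least: $\bot_\sigma \sqsubseteq_\sigma f$ holds because $\bot_\tau \sqsubseteq_\tau f(\tau)$ for every $\tau$.

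Then I would establish chain-completeness. Given an ascending chain $\{f_i\} \subseteq \interp\sigma$, fix a ground instance $\tau \in \gr\sigma$; by the pointwise definition of $\sqsubseteq_\sigma$ the sequence $\{f_i(\tau)\}$ is an ascending chain in $\intype\tau$, which therefore has a least upper bound $u_\tau = \sqcup_\tau (f_i(\tau))$ since $\intype\tau$ is a CPO. Setting $u = \{\tuple{\tau,u_\tau} \mid \tau \in \gr\sigma\}$ gives a well-defined element of $\interp\sigma$. It is an upper bound of the chain because $f_i(\tau) \sqsubseteq_\tau u_\tau$ at each $\tau$, and it is the least one: if $g$ is any upper bound, then $f_i(\tau) \sqsubseteq_\tau g(\tau)$ for all $i$, so leastness of $u_\tau$ within the fiber $\intype\tau$ gives $u_\tau \sqsubseteq_\tau g(\tau)$, and hence $u \sqsubseteq_\sigma g$. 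This is precisely the supremum described in the third bullet.

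I do not expect a genuine obstacle here, since the argument never needs to combine information across distinct ground instances --- every verification is carried out coordinatewise, which is exactly why products of CPOs are so well-behaved and why the cardinality of $\gr\sigma$ (possibly infinite) is irrelevant. The two points that require care are bookkeeping rather than mathematics: first, that elements of $\interp\sigma$ are functions, so antisymmetry and uniqueness of suprema rest on extensionality of maps over the index set $\gr\sigma$; and second, the degenerate case $\gr\sigma = \emptyset$ (which arises when $\sigma$ is an unsatisfiable qualified type), where $\interp\sigma$ is the one-point set containing the empty map and is trivially a pointed CPO with $\bot_\sigma$ the empty map and every chain constant.
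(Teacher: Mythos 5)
Your proposal is correct: the coordinatewise lifting of the PCPO structure of each $\intype\tau$ through the product $\interp\sigma = \Pi(\tau \in \gr\sigma).\intype\tau$, verified fiber by fiber, is exactly the intended argument. The paper itself gives no proof of this theorem --- it imports the result from Harrison's work by citation --- so there is nothing for your proof to diverge from; your treatment of the degenerate case $\gr\sigma = \emptyset$ is a welcome extra detail.
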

\noindent We can define continuous functions and least fixed points for sets $\interp\sigma$ in the
usual fashion:
\begin{itemize}
\item A function $f : \interp\sigma \to \interp{\sigma'}$ is continuous if $f(\sqcup_\sigma X_i) =
  \sqcup_{\sigma'}(f(X_i))$ for all directed chains $X_i$ in $\interp\sigma$.
\item The fixed point of a continuous function $f : \interp\sigma \to \interp\sigma$ is defined by
  $\fix(f) = \sqcup_\sigma (f^n(\bot_\sigma))$, and is the least value such that $\fix(f) =
  f(\fix(f))$.
\end{itemize}

\subsection{Semantics for Overloaded Expressions}

We can now give denotations for (typing derivations of) \oml{} expressions.  For some type
environment $\Gamma$ and substitution $S \in GSubst(ftv(\Gamma))$, we define an
$S-\Gamma$-environment $\eta$ as a mapping from variables to values such that $\eta(x) \in
\interp{(S\,\sigma)}$ for each assignment $(x : \sigma)$ in $\Gamma$. Given a PCPO frame $\T$, a
derivation $\De$ of $\tys{M}{\sigma}$, a ground substitution $S$, and an environment $\eta$, we
define the interpretation $\interp{\De}S\eta$ by cases.  We have included only a few, representative
cases here.

\begin{itemize}
\item Case \trule{\E\to}: we have a derivation of the form
  \[
    \infbox{\lproof[\De]
                   {\lproof[\De_1]
                           {\vdots};
                           {\tys{M}{\tau \to \tau'}}}
                   {\lproof[\De_2]
                           {\vdots};
                           {\tys{N}{\tau}}};
                   {\tys{(M\,N)}{\tau'}}}
  \]
  Let $\upsilon = S\,\tau$ and $\upsilon' = S\,\tau'$, and define
  \begin{multline*}
    \interp{\De}S\eta = \{ \tuple{ \upsilon',
      T_{\upsilon,\upsilon'}((\interp{\De_1}S\eta)(\upsilon \to \upsilon'), \\ (\interp{\De_2}S\eta)(\upsilon)) } \}.
  \end{multline*}
\item Case \trule{\I\then}: we have a derivation of the form
  \[
    \infbox{\lproof[\De]
                   {\lproof[\De_1]
                           {\vdots};
                           {\typ{P,\pi}{\Gamma}{A}{M}{\rho}}};
                   {\tys{M}{\pi \then \rho}}}
  \]
  This rule excludes those cases in which the predicate does not hold; thus, we define:
  \[
    \interp{\De}S\eta = \begin{cases}
      \interp{\De_1}S\eta &\text{if $S\,P \entails S\,\pi$;} \\
      \emptyset &\text{otherwise.}
    \end{cases}
  \]
\item Case \trule{\E\then}: we have a derivation of the form
  \[
    \infbox{\lproof[\De]
                   {\lproof[\De_1]
                           {\vdots};
                           {\tys{M}{\pi \then \rho}}}
                   {\assume{P \entails \pi}};
                   {\tys{M}{\rho}}}
  \]
  This rule does not affect the semantics of expression $M$, and so we define:
  \[
    \interp{\De}S\eta = \interp{\De_1}S\eta.
  \]
\item Case \trule{\Iforall}: we have a derivation of the form
  \[
    \infbox{\lproof[\De]
                   {\lproof[\De_1]
                           {\vdots};
                           {\tys{M}{\sigma}}}
                   {\assume{t \not\in ftv(P,\Gamma)}};
                   {\tys{M}{\forall t. \sigma}}}
  \]
  Intuitively, we interpret a polymorphic expression as the map from ground instances of its type to
  its interpretations at those types.  As the interpretation of the subderivation $\De_1$ is
  already in the form of a such a map, we can interpret $\De$ as the union of the meanings of
  $\De_1$ for each ground instantiation of the quantified variable $t$.  Formally, we define
  \[
    \interp{\De}S\eta = \bigcup_{\tau \in GType} \interp{\De_1}(S[t \mapsto \tau])\eta.
  \]
\item Case \trule{\Eforall}: we have a derivation of the form
  \[
    \infbox{\lproof[\De]
                   {\lproof[\De_1]
                           {\vdots};
                           {\tys{M}{\forall t. \sigma}}};
                   {\tys{M}{[\tau/t]\sigma}}}
  \]
  By definition, $\gr{\forall t. \sigma} = \bigcup_{\tau \in GType}\gr{[\tau/t]\sigma}$, and so
  $\gr{[\tau/t]\sigma} \subseteq \gr{\forall t. \sigma}$.  Thus, the interpretation of $\De$ is a
  subset of the interpretation of $\De_1$; writing $\restrict{f}{Y}$ for the restriction of a
  function $f$ to some subset $Y$ of its domain, we define:
  \[
    \interp{\De}S\eta = \restrict{(\interp{\De_1}S\eta)}{\gr{[\tau/t]\sigma}}.
  \]
\end{itemize}

\subsection{Expressions with Class Contexts}\label{sec:ctxt-sem}

To complete our semantics of \oml{} programs, we must account for the meaning of class methods.  Our
approach is intuitively simple: we collect the meanings of the class methods from the method
implementations in each instance, and use the meanings of the methods to define the meaning of the
main expression. Formally, we extend the interpretation function from derivations of
$\tys{M}{\sigma}$ to derivations of $\typ{P}{\Gamma}{\Psi}{M}{\sigma}$ as follows:
\begin{itemize}
\item Let $\De$ be a derivation of $\typ{P}{\Gamma}{\Psi}{M}{\tau}$.  Then we know that $\De$
  must begin with an application of \trule{Ctxt} (Figure~\ref{fig:prog-typing-oml}) with one
  subderivation
  \[
    \infbox{\lproof[\De_{y,d}]
                   {\vdots};
                   {\typ{P}{\Gamma,\overline{x_i: \sigma_{x_i}}}{A}{Im(y,d)}{\sigma_{y,d}}}}
  \]
  for each pair $\tuple{y,d} \in \dom(Im)$ and a subderivation
  \[
    \infbox{\lproof[\De_M]
                   {\vdots};
                   {\typ{P}{\Gamma, \overline{x_i: \sigma_{x_i}}}{A}{M}{\tau}}}
  \]
  for the main expression $M$.  We enumerate the methods in the program as $x_1,x_2,\dots,x_m$, and
  let
  \[
    \Sigma = \interp{\sigma_{x_1}} \times \interp{\sigma_{x_2}} \times \cdots \times \interp{\sigma_{x_m}}.
  \]
  For each method $x_i$, we define a function $f_i : \Sigma \to \interp{\sigma_{x_i}}$, approximating its
  meaning, as follows:
  \[
    f_i(\tuple{b_1, b_2, \dots,b_m})S\eta = \bigcup_{\tuple{x_i,d} \in \dom(Im)} \interp{\De_{x_i,d}}S(\eta[\overline{x_j} \mapsto \overline{b_j}]),
  \]
  and define function $f : \Sigma \to \Sigma$, approximating the meaning of all the methods in the
  program, as
  \[
    f(b) = \tuple{f_1(b), f_2(b), \dots, f_m(b)}.
  \]
  We can now define a tuple $b$, such that the component $b_i$ is the meaning of method $x_i$, as
  follows:
  \[
    b = \sideset{}{_\Sigma}\bigsqcup f^n(\bot_\Sigma).
  \]
  Finally, we extend the interpretation function to programs by
  \[
    \interp{\De}S\eta = \interp{\De_M}S(\eta[\overline{x_i} \mapsto \overline{b_i}]).
  \]
\end{itemize}

\subsection{Polymorphic Identity Functions Revisited}\label{sec:poly-id-den}

We return to our earlier example of polymorphic identity functions~\secref{poly-id-eqn}.  As before,
we consider two definitions of identity functions, one given parametrically (!id1!) and one given by
overloading (!id2!).  In this section, we will show that the denotations of !id1! and !id2! agree at
all types for which !id2! is defined.  By doing so, we provide an intuitive demonstration that our
denotational semantics captures the meaning of ad-hoc polymorphic and agrees with our definition of
equality for \oml{} terms.

\def\id2h#1{\predh{\mathtt{Id2}}{#1}}
\def\semone{\interp{\mathtt{id1}}}
\def\semtwo{\interp{\mathtt{id2}}}

We show that $\semone$ and $\semtwo$ have the same value at each point in the domain of $\semtwo$; that
is, that for any type $\tau \in GType$ such that $\entails \predh{\mathtt{Id2}}{\tau}$,
\[
  \semone(\tau \to \tau) = \semtwo(\tau \to \tau).
\]
We proceed by induction on the structure of $\tau$.  In the base case, we know that $\tau = K$ for
some non-functional type $K$.  As we have assumed $\entails \id2h\tau$, we must have that $K =
\mathtt{Int}$, and, from the instances for !Id2!, we have
\begin{align*}
  \semtwo(K \to K) &= \semtwo(\mathtt{Int} \to \mathtt{Int}) \\
   &= \interp{\vdash \lambda x:\mathtt{Int}. x:\mathtt{Int \to Int}}.
\end{align*}
As
\(
  \semone(\mathtt{Int} \to \mathtt{Int}) = \interp{\vdash \lambda x:\mathtt{Int}. x:\mathtt{Int \to Int}},
\)
we have
\(
  \interp{\mathtt{id1}}(K \to K) = \semtwo(K \to K).
\)
In the inductive case, we know that $\tau = \tau_0 \to \tau_1$ for some types $\tau_0$ and $\tau_1$.
From the assumption that $\entails \id2h{(\tau_0 \to \tau_1)}$ and the instances for !Id2!, we can
assume that $\id2h\tau_0$, $\id2h\tau_1$, and that
\[
  \semtwo(\tau \to \tau) =
    \interp{\vdash \lambda f : (\tau_0 \to \tau_1). M \circ f \circ N : \tau \to \tau}
\]
for some simply typed expressions $M$ and $N$ such that $\interp{M} = \semtwo(\tau_0 \to \tau_0)$
and $\interp{N} = \semtwo(\tau_1 \to \tau_1)$.  The induction hypothesis gives that $\semtwo(\tau_0
\to \tau_0) = \semone(\tau_0 \to \tau_0)$ and that $\semtwo(\tau_1 \to \tau_1) = \semone(\tau_1 \to
\tau_1)$, and thus that $\interp{M} = \interp{\vdash \lambda x: \tau_1. x: \tau_1 \to \tau_1}$ and
$\interp{N} = \interp{\vdash \lambda x: \tau_0. x: \tau_0 \to \tau_0}$.  By congruence, we have
\[
  \semtwo(\tau \to \tau) =
    \interp{\lambda f : (\tau_0 \to \tau_1). (\lambda x : \tau_1. x) \circ f \circ (\lambda x : \tau_0 . x)}.
\]
Finally, assuming a standard definition of composition, and reducing, we have
\begin{align*}
  \semtwo(\tau \to \tau)
  & = \interp{\lambda f: (\tau_0 \to \tau_1). f} \\
  & = \interp{\lambda f: \tau. f} \\
  & = \semone(\tau \to \tau).
\end{align*}

In our previous discussion of this example, we argued that if the set of types were restricted to
those types for which !Id2! held, then !id1! and !id2! were equal.  We can show a similar result
here, by showing that if we define that $\tau ::= \mathtt{Int} \mid \tau \to \tau$, then $\semone =
\semtwo$.  We begin by showing that they are defined over the same domain; that is, that \(
\gr{\forall t. \: t \to t} = \gr{\forall u. \id2h{u} \then u \to u}.  \) By definition, we have
\[
  \gr{\forall t. \: t \to t} = \{ \tau \to \tau \mid \tau \in GType \}
\]
and
\[
  \gr{\forall u. \: \id2h{u} \then u \to u} =
    \{ \tau \to \tau \mid \tau \in GType, \entails \id2h\tau \}.
\]
We show that $\entails \predh{\mathtt{Id2}}{\tau}$ for all types $\tau$ by induction on the
structure of $\tau$.  In the base case, we know that $\tau = \mathtt{Int}$, and by the first
instance of !Id2! we have $\entails \id2h\tau$.  In the inductive case, we know that $\tau = \tau_0
\to \tau_1$ for some types $\tau_0,\tau_1$.  In this case, we have that $[\tau_0/t, \tau_1/u] \tau =
t \to u$ and by the induction hypothesis, that $\entails \id2h\tau_0$ and $\entails \id2h\tau_1$.
Thus, from the second instance of !Id2!, we can conclude that $\entails \id2h{(\tau_o \to \tau_1)}$,
that is, that $\entails \id2h\tau$.  Because $\entails \id2h\tau$ for all ground types $\tau$, we
have
\[
  \{ \tau \to \tau \mid \tau \in GType, \entails \id2h\tau \} =
    \{ \tau \to \tau \mid \tau \in GType \},
\]
and so $\semone$ and $\semtwo$ are defined over the same domain.  We have already shown that
$\semone$ and $\semtwo$ agree at all points at which they are defined, and so we conclude $\semone =
\semtwo$.

\section{Formal Properties}\label{sec:formal}

The previous sections have outlined typing and equality judgments for \oml{} terms, and proposed a
denotational semantics for \oml{} typings.  In this section, we will relate these two views of the
language.  We begin by showing that the denotation of a typing judgment falls into the expected
type.  This is mostly unsurprising; the only unusual aspect of \oml{} in this respect is the role of
the class context.  We go on to show that the equational judgments are sound; again, the unusual
aspect is to do with polymorphism (\erule{\Iforall} and \erule{\Eforall}) and class methods
(\erule{Method}).  The \oml{} type system follows Jones's original formulation of OML; we rely on
several of his metatheoretical results, such as the closure of typing under substitution.

\begin{theorem}[Soundness of typing]\label{thm:type-soundness}
  Given a class context $\Psi$, if $\De$ is a derivation of $\typ{P}{\Gamma}{\Psi}{M}{\sigma}$, $S$
  is a substitution, and $\eta$ is an $(S\,\Gamma)$-environment, then $\interp{\De}S\eta
  \in \interp{(S\,P \mid S\,\sigma)}_\Psi$.
\end{theorem}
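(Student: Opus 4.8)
The plan is to argue by induction on the typing derivation $\De$. Since every derivation of $\typ{P}{\Gamma}{\Psi}{M}{\sigma}$ ends in an application of \trule{Ctxt}, I would first peel off that final step and reduce the theorem to a lemma about the underlying judgment $\tys{M}{\sigma}$, the one indexed by the axiom set $A$ alone: for every such derivation $\De'$, every ground substitution $S$, and every $(S\,\Gamma)$-environment $\eta$, we have $\interp{\De'}S\eta \in \interp{(S\,P \mid S\,\sigma)}$. Since $\interp{(S\,P \mid S\,\sigma)} = \Pi(\upsilon \in \gr{(S\,P \mid S\,\sigma)}).\intype\upsilon$, membership is really two claims at once: that the map $\interp{\De'}S\eta$ has \emph{exactly} the domain $\gr{(S\,P \mid S\,\sigma)}$, and that its value at each ground instance $\upsilon$ lies in $\intype\upsilon$. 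Throughout I would rely on Jones's metatheory, especially that entailment is closed under substitution, which is what keeps these domains aligned.

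The inner induction then goes by cases on the last rule of $\De'$. The structural cases---\trule{Var}, \trule{$\I\to$}, \trule{$\E\to$}, \trule{$\mu$}, \trule{Let}---follow from the induction hypothesis together with conditions (1)--(5) of the PCPO frame; for example, in \trule{$\E\to$} the hypothesis places the two subterm denotations in $\intype{\upsilon \to \upsilon'}$ and $\intype\upsilon$, and condition (3) carries their image under $T_{\upsilon,\upsilon'}$ into $\intype{\upsilon'}$. The predicate rules carry the qualified-type bookkeeping. In \trule{$\I\then$} the case split in the definition of $\interp{\De'}S\eta$ matches the two possibilities for $\gr{(S\,P \mid S\,(\pi \then \rho))}$ exactly: when $S\,P \entails S\,\pi$ the ground instances of $(S\,P \mid \pi \then \rho)$ and of $(S\,P, \pi \mid \rho)$ coincide and the hypothesis transfers, and otherwise the domain is empty and the interpretation is the empty map, the unique inhabitant of the empty product. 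The rule \trule{$\E\then$} is dual, using substitution-closure of entailment to see that a derivable $S\,\pi$ does not change the ground instances. For the quantifiers, \trule{\Eforall} restricts the hypothesis's map to the subdomain $\gr{(S\,P \mid S\,[\tau/t]\sigma)} \subseteq \gr{(S\,P \mid S\,\forall t.\sigma)}$, while \trule{\Iforall} forms the union over $\tau \in GType$ of the denotations $\interp{\De_1}(S[t \mapsto \tau])\eta$, whose domains---since $t \notin ftv(P,\Gamma)$---reassemble precisely into $\gr{(S\,P \mid S\,\forall t.\sigma)}$.

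The \trule{Ctxt} case is where I expect the real work, and it is the step the text itself singles out as unusual. Here $\interp{\De}S\eta = \interp{\De_M}S(\eta[\seq{x_i} \mapsto \seq{b_i}])$, where $b = \bigsqcup_\Sigma f^n(\bot_\Sigma)$ is the fixed point of the method-meaning operator $f$ on $\Sigma = \interp{\sigma_{x_1}} \times \cdots \times \interp{\sigma_{x_m}}$. To apply the inner lemma to $\De_M$ I must check that $\eta[\seq{x_i} \mapsto \seq{b_i}]$ really is an $(S\,(\Gamma,\seq{x_i : \sigma_{x_i}}))$-environment, i.e.\ that each $b_i \in \interp{\sigma_{x_i}}$. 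I would establish this by a secondary induction showing that every approximant $f^n(\bot_\Sigma)$ is well-typed in this sense: the base case $\bot_\Sigma$ is immediate from Harrison's theorem, and the step applies the inner lemma to each method-implementation derivation $\De_{x_i,d}$ (of type $\sigma_{x_i,d}$) and unions over the defining instances $d$. Passing to the limit is legitimate because, by Harrison's theorem, each $\interp{\sigma_{x_i}}$ is a pointed CPO closed under least upper bounds of chains, once $f$ is seen to be continuous.

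Two points will need genuine care. First, the union defining each $f_i$ must be a single-valued map covering all of $\gr{\sigma_{x_i}}$: this is exactly where the non-overlapping premise of \trule{Ctxt} earns its keep, guaranteeing that distinct instances $d$ contribute disjoint domains $\gr{\sigma_{x_i,d}}$, while satisfiability of a ground instance of $\sigma_{x_i}$ forces some instance to apply, so the pieces tile the domain without gaps or overlaps. Second, and more subtly, the union in \trule{\Iforall} is a well-defined function only if the interpretations agree at any ground instance $\upsilon$ reachable from two different choices of $\tau$; this requires a separate semantic substitution lemma stating that $\interp{\De_1}S\eta$ depends on $S$ only through the ground instance it names, and not otherwise on $S(t)$. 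I would prove that lemma first, since both the \trule{\Iforall} case and the well-definedness of the method fixed point quietly rely on it. A final bookkeeping remark is that the clean domain equation in the base cases presumes $S\,P$ is satisfiable; the statement is naturally read for such groundings (equivalently, for the $S$ arising from actual program typings), and it is precisely \trule{$\I\then$} and \trule{$\E\then$} that create and discharge this satisfiability information.
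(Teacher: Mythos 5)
Your proposal is correct and takes essentially the same route as the paper: peel off \trule{Ctxt}, prove an expression-level soundness lemma for the judgment $\tys{M}{\sigma}$ by induction on the derivation (with the same case analysis for \trule{$\I\then$}, \trule{$\E\then$}, \trule{\Iforall}, \trule{\Eforall}), and then handle the class context by showing each method denotation $b_i$ lies in $\interp{\sigma_{x_i}}$ because the domains $\gr{\sigma_{x_i,d}}$ tile $\gr{\sigma_{x_i}}$ without gaps or overlaps---exactly the content of the paper's Lemmas~\ref{thm:method-type-schemes} and~\ref{thm:method-types}---combined with Harrison's CPO structure for the fixed point. Your explicit approximant induction for $f^n(\bot_\Sigma)$ and the semantic substitution lemma underpinning the \trule{\Iforall} union are refinements of steps the paper leaves implicit, not a different argument.
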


We will divide the proof into three pieces. First, we show the soundness of the judgment
$\tys{M}\sigma$.  Then, we will argue that the union of the implementations of a method has the type
of the method itself.  Finally, we can combine these results to argue the soundness of
$\typ{P}{\Gamma}{\Psi}{M}{\sigma}$.

\begin{lemma}\label{thm:expr-type-soundness}
  Given a class context $\Psi = \tuple{A,Si,Im}$ where $A$ is non-overlapping, if $\De$ is a
  derivation of $\tys{M}{\sigma}$, $S$ is a substitution, and $\eta$ is a
  $(S\,\Gamma)$-environment, then $\interp{\De}S\eta \in \interp{(S\,P \mid S\,\sigma)}_\Psi$.
\end{lemma}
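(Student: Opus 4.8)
The plan is to prove Lemma~\ref{thm:expr-type-soundness} by structural induction on the typing derivation $\De$ of $\tys{M}{\sigma}$, showing in each case that the defined interpretation $\interp{\De}S\eta$ lands in $\interp{(S\,P \mid S\,\sigma)}_\Psi$. Recall that membership in $\interp{(S\,P \mid \sigma')}_\Psi$ means being a map defined exactly on the ground instances $\gr{(S\,P \mid \sigma')}_\Psi$ sending each such ground type $\tau$ to an element of $\intype\tau$; so in every case I must check both that the domain of $\interp{\De}S\eta$ is the correct set of ground instances and that each value is a monomorphic element of the right PCPO. For the first-order cases (\trule{Var}, \trule{$\I\to$}, \trule{$\E\to$}, \trule{$\mu$}, \trule{Let}), the type scheme is a bare type $\tau$, the ground-instance set is a singleton, and the relevant obligation reduces to the standard type-frame soundness guaranteed by condition~(5) of the PCPO-frame definition, so these follow essentially by unwinding the definitions and invoking the induction hypotheses on the immediate subderivations.

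The interesting cases are those manipulating qualifiers and quantifiers, and I would treat them in the order in which they alter the ground-instance set. For \trule{$\E\then$}, the semantics is unchanged, $\interp{\De}S\eta = \interp{\De_1}S\eta$, so I need only observe that discharging $\pi$ via $P \entails \pi$ does not change the set of ground instances: $\gr{(S\,P \mid S(\pi \then \rho))}_\Psi = \gr{(S\,P \mid S\,\rho)}_\Psi$ precisely because $\entails_A (S\,P, S\,\pi)$ iff $\entails_A S\,P$ (as $S\,\pi$ is already entailed by $S\,P$), and apply the IH. For \trule{$\I\then$}, the two-branch definition mirrors exactly the two-branch definition of $\gr{\pi \then \rho}_\Psi$, so the result follows by case analysis on whether $S\,P \entails S\,\pi$. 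For \trule{\Eforall}, the semantics restricts $\interp{\De_1}S\eta$ to $\gr{[\tau/t]\sigma}$, which is a subset of $\gr{\forall t.\sigma}$; the restriction of a correctly-typed map to a subdomain is again correctly typed. The most delicate case is \trule{\Iforall}, where $\interp{\De}S\eta = \bigcup_{\tau \in GType}\interp{\De_1}(S[t \mapsto \tau])\eta$: here I must verify that these maps have pairwise-consistent values on overlapping parts of their domains (so that the union really is a function), using the side condition $t \notin ftv(P,\Gamma)$ together with a substitution/coincidence lemma, and that the union of their domains is exactly $\gr{\forall t.\sigma}$, which matches the recursive clause $\gr{\forall t.\sigma}_\Psi = \bigcup_{\tau}\gr{[\tau/t]\sigma}_\Psi$.

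I expect the main obstacle to be precisely this well-definedness of the union in the \trule{\Iforall} case: I must rule out the possibility that two distinct ground instantiations $S[t \mapsto \tau]$ and $S[t \mapsto \tau']$ contribute conflicting values at a common ground type in their images. This is where non-overlapping of $A$ and the independence of the interpretation from the instantiation of a variable not free in $P$ or $\Gamma$ are both needed, and it is the one place where the proof genuinely depends on the overloading machinery rather than being a routine recapitulation of Ohori- and Harrison-style type-frame reasoning. I would isolate this as a substitution lemma stating that $\interp{\De_1}(S[t \mapsto \tau])\eta$ agrees, on any ground type lying in both images, with $\interp{\De_1}(S[t \mapsto \tau'])\eta$, and prove it by a secondary induction on $\De_1$; the remaining bookkeeping (agreement of domains, pointwise membership in each $\intype\tau$) is then immediate from the IH and the definition of $\gr\cdot_\Psi$.
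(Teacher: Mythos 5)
Your overall route is the paper's: induction on $\De$, first-order cases dispatched by the frame conditions, and the qualifier/quantifier cases reduced to identities between ground-instance sets. Your \trule{$\I\then$}, \trule{$\E\then$}, and \trule{\Eforall} cases are, up to phrasing, exactly the paper's (\trule{$\I\then$} via $\gr{(S\,(P,\pi) \mid S\,\rho)} = \gr{(S\,P \mid S\,(\pi \then \rho))}$, \trule{$\E\then$} via closure of entailment under substitution, \trule{\Eforall} via restriction to a subdomain). You are also right that \trule{\Iforall} is where the real content lies, and right about why: the paper's one-line argument asserts $\interp{\forall t.\,\sigma'} = \bigcup_{\tau}\interp{[\tau/t]\sigma'}$ and moves on, silently skipping the obligation that the union of functions $\bigcup_\tau \interp{\De_1}(S[t\mapsto\tau])\eta$ be single-valued. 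Spotting that obligation is a genuine improvement over the paper's exposition.

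The gap is in your proposed repair. The coincidence lemma you want --- that $\interp{\De_1}(S[t\mapsto\tau])\eta$ and $\interp{\De_1}(S[t\mapsto\tau'])\eta$ agree at common ground types whenever $t \notin ftv(P,\Gamma)$ --- is false, so no secondary induction on $\De_1$ can establish it. The side condition of \trule{\Iforall} does not keep $t$ out of $\sigma'$; when $t$ occurs only in the qualifiers of $\sigma'$, distinct instantiations can share their ground-instance sets while disagreeing on values. Concretely, with $\mathtt{read} : \forall t.\,\predh{\mathtt{Read}}{t} \then \mathtt{String} \to t$ and $\mathtt{show} : \forall t.\,\predh{\mathtt{Show}}{t} \then t \to \mathtt{String}$ in $\Gamma$, let $\De_1$ derive $\lambda s.\,\mathtt{show}\,(\mathtt{read}\,s) : (\predh{\mathtt{Read}}{t},\predh{\mathtt{Show}}{t}) \then \mathtt{String}\to\mathtt{String}$ with $P = \emptyset$, so that \trule{\Iforall} applies. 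For any $\tau \ne \tau'$ admitting $\mathtt{Read}$ and $\mathtt{Show}$ instances, both summands are defined at the single ground type $\mathtt{String}\to\mathtt{String}$, with values determined by $\eta(\mathtt{read})(\mathtt{String}\to\tau)$ versus $\eta(\mathtt{read})(\mathtt{String}\to\tau')$, which a legitimate $(S\,\Gamma)$-environment can make distinct. Non-overlap of $A$, which you invoke, is not the relevant hypothesis here: it is what makes the union over instances $d$ of the method meanings single-valued (Section~\ref{sec:ctxt-sem}, used for Theorem~\ref{thm:type-soundness}), and the counterexample uses perfectly non-overlapping axioms. What actually closes the \trule{\Iforall} case is a split: if $t$ occurs in the matrix of $\sigma'$, the sets $\gr{[\tau/t]\sigma'}$ are pairwise disjoint and compatibility is vacuous; if $t$ occurs nowhere in $\sigma'$, a coincidence argument of the kind you describe applies; but if $t$ occurs in the qualifiers alone, the scheme is ambiguous in the sense of Definition~\ref{def:ambiguous} and the union can genuinely be multivalued --- the case can only be closed under an unambiguity hypothesis, which Lemma~\ref{thm:expr-type-soundness} does not carry and which the paper imposes only later, for coherence (Theorem~\ref{thm:coherence}). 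In fairness, the paper's own proof jumps over exactly the same hole; but your patch, as stated, is not a proof of this case.
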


\begin{proof}
  The proof is by induction over the structure of derivation $\De$.  The cases are straightforward;
  we include several representative examples.  (Meta-variables $\De_n$ are as in the definition of
  $\interp\cdot$ above.)
  \begin{itemize}
  \item \emph{Case \trule{$\then$ I}.} Observe that $\gr{(S(P, \pi) \mid S\,\rho)} = \gr{(S\,P \mid
    S\,(\pi \then \rho))}$. As such, if
    \[
      \interp{\De_1}S\eta \in \interp{(S\,(P, \pi) \mid S\,\rho)}_\Psi,
    \]
    then we must also have that
    \[
      \interp{\De}S\eta \in \interp{(S\,P \mid S\,(\pi \then \rho))}_\Psi.
    \]
  \item \emph{Case \trule{$\then$ E}.}  As entailment is (trivially) closed under substitution, $P
    \entails \pi$ implies that $S\,P \entails S\,\pi$ for any substitution $S$; thus, we can conclude
    that $\gr{(S\,P \mid S\,(\pi \then \rho))} = \gr{(S\,P \mid S\,\rho)}$.  Finally, assuming that
    $\interp{\De_1}S\eta \in \interp{(S\,P \mid S\,(\pi \then \rho))}$, we can conclude that
    $\interp{\De}S\eta \in \interp{(S\,P \mid S\,\rho)}$.
  \item \emph{Case \trule{\Iforall}.}  Because $\sigma = \forall t. \sigma'$, we have that
    \[\gr\sigma = \bigcup_{\tau \in GType}\gr{[\tau/t]\sigma'},\] and thus that \[\interp\sigma =
    \bigcup_{\tau \in GType} (\interp{[\tau/t]\sigma'}).\] Thus, assuming that for ground types
    $\tau$, $\interp{\De_1}(S[t \mapsto \tau])\eta \in \interp{(S\,P \mid S\,\sigma')}$, we have
    \[
      \interp{\De}S\eta \in \left(\bigcup_{\tau \in GType} \interp{(S\,P \mid S\,\sigma')}\right) = \interp{(S\,P \mid S\,\sigma)}.
    \]
  \item \emph{Case \trule{\Eforall}.}  Assuming that $\interp{\De_1}S\eta \in \interp{(S\,P \mid
    S\,(\forall t. \sigma'))}$, the same argument about ground types as in the previous case gives
    that $\interp{\De}S\eta \in \interp{(S\,P \mid S\,\sigma)}$. \qedhere
  \end{itemize}
\end{proof}

The interpretation of typings $\typ{P}{\Gamma}{\Psi}{M}{\sigma}$ depends on the interpretations of
the class methods.  We will begin by showing that the interpretation of each method is in the
denotation of its type.  To do so, we will demonstrate that the interpretation of the type scheme of
a method is the union of the interpretation of the type schemes of its instances.  This will show
that the union of the implementations is in the type of the method, from which the desired result
follows immediately.

\begin{lemma}\label{thm:method-type-schemes}
  The ground instances of the type scheme of a method $x$ are the union of its ground instances at
  each of its instances.  That is,
  \[
    \gr{\sigma_x} = \bigcup_{\tuple{x,d} \in \dom(Im)} \gr{\sigma_{x,d}}.
  \]
\end{lemma}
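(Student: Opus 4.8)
The plan is to prove the two set inclusions separately by unfolding the definition of $\gr{\cdot}$ for the qualified schemes involved. Before starting I would disambiguate the overloaded notation: the variables $\seq t = ftv(\pi)$ appearing in $\sigma_x$ are the class-predicate variables, whereas the $\seq t$ appearing in the instance axiom $\qclause{d}{\seq t}{P}{\pi'}$ and in $\sigma_{x,d}$ are the instance variables; I would rename the latter to $\seq s$, and write $R$ (rather than $S$) for the matcher with $R\,\pi = \pi'$, reserving $S,T,U$ for ground substitutions. Writing $Si(x) = \tuple{\pi, \forall \seq u. (Q \then \tau)}$ with $\pi = \predh{C}{\seq t}$, unfolding $\gr{\cdot}$ gives
\[
  \gr{\sigma_x} = \{ S\,\tau \mid S \in GSubst(\seq t, \seq u),\ \entails_A S\,\pi,\ \entails_A S\,Q \},
\]
while for each admissible instance $d$, with axiom $\qclause{d}{\seq s}{P'}{\pi'}$ and matcher $R\,\pi = \pi'$,
\[
  \gr{\sigma_{x,d}} = \{ T\,(R\,\tau) \mid T \in GSubst(\seq s, \seq u),\ \entails_A T\,P',\ \entails_A T\,(R\,Q) \}.
\]

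For the inclusion $\gr{\sigma_x} \subseteq \bigcup_d \gr{\sigma_{x,d}}$, I would take $\theta = S\,\tau \in \gr{\sigma_x}$ and exploit that $S\,\pi$ is a ground predicate proved from empty assumptions: since \trule{Assume} is inapplicable (the assumption set is empty), the derivation of $\entails_A S\,\pi$ must end in \trule{Axiom}. This yields an instance $\qclause{d}{\seq s}{P'}{\pi'} \in A$ and a ground substitution $U$ with $U\,\pi' = S\,\pi$ and $\entails_A U\,P'$. Because $class(\pi') = class(S\,\pi) = C$ and $x$ is a method of $C$, the completeness-of-instances restriction (each instance implements every method of its class) gives $\tuple{x,d} \in \dom(Im)$, so $\sigma_{x,d}$ is defined. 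It then remains to exhibit a witness $T$ for $\theta \in \gr{\sigma_{x,d}}$: I would set $T = U$ on $\seq s$ and $T = S$ on $\seq u$. The essential computation is that $T \circ R$ and $S$ agree on $\seq t \cup \seq u$---on $\seq t$ this is exactly the content of $U\,\pi' = S\,\pi$ rewritten through $\pi' = R\,\pi$ and $\pi = \predh{C}{\seq t}$, and on $\seq u$ both fix the method variables---whence $T\,(R\,\tau) = S\,\tau = \theta$, while $T\,P' = U\,P'$ gives $\entails_A T\,P'$ and $T\,(R\,Q) = S\,Q$ gives $\entails_A T\,(R\,Q)$.

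For the reverse inclusion $\gr{\sigma_{x,d}} \subseteq \gr{\sigma_x}$, I would take $\theta = T\,(R\,\tau)$ witnessed as above and set $S = T \circ R$ on $\seq t$ and $S = T$ on $\seq u$, so that again $S\,\tau = T\,(R\,\tau) = \theta$. Here $S\,Q = T\,(R\,Q)$, so $\entails_A S\,Q$ holds by assumption, and $\entails_A S\,\pi$ is recovered by a single application of \trule{Axiom} to the very instance $d$: the goal $S\,\pi = T\,(R\,\pi) = T\,\pi'$ is matched by $T$ itself, and the premise $\entails_A T\,P'$ is available. This gives $\theta \in \gr{\sigma_x}$.

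The routine part is the substitution bookkeeping; the step I expect to be the crux is the forward direction's appeal to inversion on entailment---namely that a ground class predicate can only be proved by \trule{Axiom}, and that the witnessing instance necessarily lies in $\dom(Im)$ for method $x$. This is where the implicit well-formedness assumption (instances provide a complete set of method implementations) is used. I would also note that non-overlapping of $A$ is \emph{not} needed for this lemma: the statement is a union, so several instances may legitimately contribute the same ground type, and no uniqueness of the chosen $d$ is required.
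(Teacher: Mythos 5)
Your proof is correct and takes essentially the same approach as the paper's: split the equality into two inclusions, invert the derivation of $\entails_A S\,\pi$ (which must end in \trule{Axiom}, the assumption set being empty) to obtain the witnessing instance $d$, and then build the ground substitution showing membership in $\gr{\sigma_{x,d}}$. You are merely more explicit at two points the paper elides: the reverse inclusion (which the paper dismisses as ``an identical argument'') and the appeal to the implicit completeness-of-instances restriction to conclude $\tuple{x,d} \in \dom(Im)$.
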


\begin{proof}
  Let $\sigma_x = \forall \seq{t}. (\pi, Q) \then \tau$, where $x$ is a method of $class(\pi)$.  We
  prove that
  \[
    \gr{\sigma_x} = \bigcup_{\tuple{d,x} \in \dom(Im)} \gr{\sigma_{x,d}}
  \]
  by the inclusions
  \[
    \gr{\sigma_x} \subseteq \bigcup_{\tuple{x,d} \in \dom(Im)} \gr{\sigma_{x,d}},
  \]
  and
  \[
    \gr{\sigma_x} \supseteq \bigcup_{\tuple{x,d} \in \dom(Im)} \gr{\sigma_{x,d}}.
  \]
  We will show only the first inclusion; the second is by an identical argument.  Fix some $\upsilon
  \in \gr{\sigma_x}$.  By definition, there is some $S \in GSubst(\seq t)$ such that $\upsilon =
  S\,\tau$ and $\entails S\,\pi,S\,Q$.  Because $\entails S\,\pi$, there must be some
  $(\qclause{d}{\seq{u}}{P}{\pi'}) \in A$ and substitution $S' \in GSubst(\seq u)$ such that
  $S\,\pi = S'\,\pi'$ and $\entails S'\,P$.  Now, we have that $\sigma_{x,d} = \forall \seq{t}'. (P,
  T\,Q) \then T\,\tau$ for some substitution $T$; thus, there is some $T' \in GSubst(\seq t')$ such
  that $\upsilon = T'\,(T\,\tau)$, $S\,P = T'\,(T\,Q)$, and so $\upsilon \in \gr{\sigma_{x,d}}$.
\end{proof}

\begin{lemma}\label{thm:method-types}
  The interpretation of the type scheme of a method $x$ is the union of the interpretations of its
  type scheme at each instance.  That is,
  \[\interp{\sigma_x} = \bigcup_{\tuple{x,d} \in \dom(Im)} \interp{\sigma_{x,d}}.\]
\end{lemma}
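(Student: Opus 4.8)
The plan is to reduce the statement to Lemma~\ref{thm:method-type-schemes}, exploiting the fact that the interpretation $\interp{\sigma}$ of a type scheme is, by definition, the dependent product $\Pi(\tau \in \gr{\sigma}).\intype{\tau}$, and so depends on $\sigma$ only through its set of ground instances $\gr{\sigma}$. First I would unfold both sides into this form, writing
\[
  \interp{\sigma_x} = \Pi(\tau \in \gr{\sigma_x}).\intype{\tau}
  \quad\text{and}\quad
  \interp{\sigma_{x,d}} = \Pi(\tau \in \gr{\sigma_{x,d}}).\intype{\tau}.
\]
Lemma~\ref{thm:method-type-schemes} already supplies the crucial fact about the index sets, namely $\gr{\sigma_x} = \bigcup_{\tuple{x,d} \in \dom(Im)} \gr{\sigma_{x,d}}$, so all that remains is to show that forming $\interp{\cdot}$ commutes with this union of ground-instance sets. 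This is exactly the move made implicitly in the \trule{\Iforall} case of Lemma~\ref{thm:expr-type-soundness}, where $\gr{\sigma} = \bigcup_{\tau} \gr{[\tau/t]\sigma'}$ is promoted to $\interp{\sigma} = \bigcup_{\tau} \interp{[\tau/t]\sigma'}$; here I would make that step explicit.

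Concretely, the key step is the distributivity law $\Pi(\tau \in \bigcup_d D_d).\intype{\tau} = \bigcup_d \Pi(\tau \in D_d).\intype{\tau}$, where the right-hand union is read as the collection of functions glued from a choice of one element of each $\interp{\sigma_{x,d}}$. In the forward direction this is routine: given $f \in \interp{\sigma_x}$ and any instance $d$, the restriction $\restrict{f}{\gr{\sigma_{x,d}}}$ lies in $\interp{\sigma_{x,d}}$, so $f$ is recovered as the union of its restrictions. The backward direction is the gluing: from functions $g_d \in \interp{\sigma_{x,d}}$ I would assemble a single function on $\gr{\sigma_x}$ whose value at $\tau$ is $g_d(\tau)$ for any $d$ with $\tau \in \gr{\sigma_{x,d}}$, and check that it lands in $\interp{\sigma_x}$ using the typing $g_d(\tau) \in \intype{\tau}$.

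The main obstacle is that the sets $\gr{\sigma_{x,d}}$ need not be disjoint: a single ground instance $\upsilon$ may arise from more than one instance $d$, for instance when the method's result type does not mention the class variable, so that distinct class instances all contribute the same $\upsilon$. Consequently the gluing step is only well defined on families $(g_d)_d$ that \emph{agree} on shared ground instances, and the union notation must be understood with that proviso; a naive union of graphs would otherwise fail to be a function. I would therefore phrase the backward direction as gluing of compatible families, observing that the forward restrictions $(\restrict{f}{\gr{\sigma_{x,d}}})_d$ are automatically compatible, so that restriction and gluing are mutually inverse and the two sets coincide. No appeal to the behaviour of the implementations $Im$ is needed at this level, since we are comparing function \emph{spaces}; the agreement of the actual method denotations on overlapping ground instances is a separate matter, addressed when reasoning about the implementations themselves.
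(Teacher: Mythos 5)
Your proposal matches the paper's proof: both unfold $\intsch{\sigma}$ as the dependent product $\Pi(\tau \in \gr{\sigma}).\intype{\tau}$, invoke Lemma~\ref{thm:method-type-schemes} to replace $\gr{\sigma_x}$ by $\bigcup_{\tuple{x,d} \in \dom(Im)} \gr{\sigma_{x,d}}$, and then commute the product with the union. The only difference is one of rigor: where the paper justifies that exchange with the single phrase ``as $\intype{\cdot}$ is a function,'' you spell out that the union of function spaces must be read as restriction in one direction and gluing of compatible families in the other, and you correctly observe that the sets $\gr{\sigma_{x,d}}$ need not be disjoint---a subtlety the paper silently elides---so your added care is a faithful completion of the same argument rather than a different route.
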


\begin{proof}
  Recall that
  \[
    \intsch{\sigma_x} = \Pi (\tau \in \gr{\sigma_x}). \intype{\tau}.
  \]
  From Lemma~\ref{thm:method-type-schemes}, we have that
  \[
    \intsch{\sigma_x} = \Pi \left(\tau \in \bigcup_{\tuple{x,d} \in \dom(Im)} \gr{\sigma_{x,d}} \right).
      \intype{\tau}.
  \]
  As $\intype\cdot$ is a function, this is equivalent to
  \[
    \intsch{\sigma_x} = \bigcup_{\tuple{x,d} \in \dom(Im)} \Pi(\tau \in \gr{\sigma_{x,d}}). \intype\tau,
  \]
  and finally, again from the definition of $\intsch\cdot$,
  \[
    \intsch{\sigma_x} = \bigcup_{\tuple{x,d} \in \dom(Im)} \intsch{\sigma_{x,d}}. \qedhere
  \]
\end{proof}

\begin{proof}[Proof of Theorem~\ref{thm:type-soundness}]
  Finally, we can extend the soundness of our semantics to include class contexts.  From
  Lemmas~\ref{thm:method-type-schemes} and~\ref{thm:method-types}, we know that the interpretations
  of the methods fall in the interpretations of their type schemes, and so if $\eta$ is a
  $S-\Gamma$-environment, then $\eta[\overline{x_i} \mapsto \overline{b_i}]$ is a
  $S-(\Gamma,\overline{x_i:\sigma_{x_i}})$-environment.  From Theorem~\ref{thm:expr-type-soundness}, we have that
  $\interp{\De_M}S(\eta[\overline{x_i} \mapsto \overline{b_i}) \in \interp{(S\,P \mid
      S\,\sigma)}_\Psi$, and thus that $\interp{\De}S\eta \in \interp{(S\,P \mid S\,\sigma)}_\Psi.$
\end{proof}

We would like to know that the meaning of an expression is independent of the particular choice of
typing derivation.  Unfortunately, this is not true in general for systems with type classes.  A
typical example involves the !read! and !show! methods, which have the following type signatures
\begin{code}
read :: Read t => String -> t
show :: Show t => t -> String
\end{code}
We can construct an expression !show . read! of type
\[
  (\predh{\mathtt{Read}}{t},\predh{\mathtt{Show}}{t}) \then \mathtt{String} \to \mathtt{String},
\]
where variable $t$ can be instantiated arbitrarily in the typing, changing the meaning of the
expression.  To avoid this problem, we adopt the notion of an unambiguous type scheme from Jones's
work on coherence for qualified types~\cite{Jones93Coherence}.

\begin{defn}\label{def:ambiguous}
  A type scheme $\sigma = \forall \vec t. P \then \tau$ is unambiguous if $ftv(P) \subseteq
  ftv(\tau)$.
\end{defn}

\noindent
As long as we restrict our attention to unambiguous type schemes, we have the expected coherence
result.  For example, suppose that $\De$ is a derivation of $\tys{\lambda x.M}{\sigma}$.  We observe
that $\De$ must conclude with an application of \trule{\I\to}, say at $\typ{P_0}{\Gamma}{A}{\lambda
  x.M}{\tau \to \tau'}$, followed by a series of applications of \trule{\I\then}, \trule{\E\then},
\trule{\Iforall} and \trule{\Eforall}.  While these latter applications determine $\sigma$, we can
see intuitively that each $\upsilon \in \gr\sigma$ must be a substitution instance of $\tau \to
\tau'$, and that the interpretation of $\Delta$ at each ground type must be the interpretation of an
instance of the subderivation ending with \trule{\I\to}.  We can formalize these two observations by
the following lemma.

\newcommand{\proveq}{\ensuremath{\dashv \vdash}}

\begin{lemma}\label{thm:quantifiers}
  If $\sigma = \forall \seq t. Q \then \tau$, and $\De_1\dots\De_n$ is a sequence of derivations
  such that:
  \begin{itemize}
  \item $\Delta_1$ is a derivation of $\typ{P_1}{\Gamma}{A}{M}{\tau_1}$;
  \item $\Delta_n$ is a derivation of $\tys{M}{\sigma}$;
  \item Each of $\Delta_2\dots\Delta_n$ is by \trule{\I\then}, \trule{\E\then}, \trule{\Iforall} or
    \trule{\Eforall}; and,
  \item Each $\Delta_i$ is the principal subderivation of $\Delta_{i+1}$
  \end{itemize}
  then
  \begin{enumerate}[(a)]
  \item There is a substitution $S$ such that $\tau = S\,\tau_1$ and $P \cup Q \proveq S\,P_1$; and,
  \item For all ground substitutions S, for all $\upsilon \in \gr{S\,\sigma}$, there is a unique
    $S'$ such that $\interp{\Delta_n}S\eta\upsilon = \interp{\Delta_1}S'\eta\upsilon$.
  \end{enumerate}
\end{lemma}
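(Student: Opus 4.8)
The plan is to prove (a) and (b) together by a single induction on the length $n$ of the chain, with a case analysis on the outermost rule of $\De_n$ (equivalently, the rule applied in the step from $\De_{n-1}$ to $\De_n$). In the base case $n = 1$ we have $\De_n = \De_1$, so $\sigma = \tau_1$ with $\seq t$ and $Q$ empty: for (a) the identity substitution gives $\tau = \tau_1$ and $P \cup Q = P_1$ vacuously, and for (b) every $\upsilon \in \gr{S\sigma} = \{ S\,\tau_1 \}$ forces $S' = S$ on $ftv(\tau_1)$, which is exactly the $S'$ making $\interp{\De_1}S'\eta\,\upsilon = \interp{\De_n}S\eta\,\upsilon$.

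For the inductive step I would dispatch the four admissible rules separately, applying the induction hypothesis to $\De_1 \dots \De_{n-1}$ in each case. For \trule{\Iforall} the scheme is $\forall t.\sigma''$ with $\sigma''$ again in canonical form, and the side condition $t \notin ftv(P,\Gamma)$ lets the witnessing substitution carry over unchanged; semantically $\interp{\De_n}S\eta = \bigcup_{\tau' \in GType}\interp{\De_{n-1}}(S[t \mapsto \tau'])\eta$, so a given $\upsilon$ arises from some instantiation $\tau'$ and the hypothesis supplies the required $S'$. For \trule{\Eforall} the result scheme is $[\upsilon_0/t]\sigma''$; I would compose the hypothesis substitution $S''$ with $[\upsilon_0/t]$ to obtain $S$, using $t \notin ftv(P)$ (ensured by $\alpha$-renaming the bound variable) and the closure of entailment under substitution to push $[\upsilon_0/t]$ through the mutual entailment, while for (b) the rule only restricts the domain, so the claim is immediate from the hypothesis.

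For \trule{\I\then} the scheme gains a qualifier $\pi$ while the context loses it, so the relation $P \cup Q \proveq S\,P_1$ is preserved once we observe $(P \cup \{\pi\}) \cup (Q \setminus \{\pi\}) = P \cup Q$, and the grounding identity $\gr{(SP \mid S(\pi \then \rho))} = \gr{(SP, S\pi \mid S\rho)}$ matches the two domains for (b). For \trule{\E\then} the side condition $P \entails \pi$ is exactly what discharges the extra qualifier, using $P \cup \{\pi\} \proveq P$ for (a) and the corresponding equality of ground-instance sets for (b). In every case the witness for (a) is obtained by a short composition of substitutions, and the witness for (b) is inherited from the induction hypothesis at the same ground type.

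The main obstacle I anticipate is the uniqueness claim of (b) in the \trule{\Iforall} case: I must argue that the instantiation $\tau'$ of the generalized variable $t$ yielding a given $\upsilon$ is pinned down. When $t \in ftv(\tau)$ this is immediate, since $\upsilon = S'\tau$ determines $S'(t)$ and hence $\tau'$; when $t \notin ftv(\tau)$ the generalization is vacuous on the base type, and because $t \notin ftv(P,\Gamma)$ the value $\interp{\De_{n-1}}(S[t \mapsto \tau'])\eta\,\upsilon$ does not depend on the choice of $\tau'$, so any admissible witness yields the same value. I would therefore phrase uniqueness as uniqueness of $S'$ on the free variables that actually occur in $\De_1$, namely $ftv(P_1,\Gamma,\tau_1)$, which is all the interpretation depends on; this is also precisely the point where the unambiguity hypothesis of the surrounding coherence development becomes relevant for the downstream use of the lemma.
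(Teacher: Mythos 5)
Your proposal is correct and takes essentially the same route as the paper: the paper's entire proof is the single remark that ``the proof is by induction on $n$; the cases are all trivial,'' and your case analysis over \trule{\I\then}, \trule{\E\then}, \trule{\Iforall}, and \trule{\Eforall} is exactly that induction spelled out, with the right bookkeeping (invariance of $P \cup Q$ under moving qualifiers, composition with $[\upsilon_0/t]$, closure of entailment under substitution). Your closing observation---that the uniqueness in (b) must be read as uniqueness of $S'$ on the variables $\interp{\Delta_1}$ actually depends on, with genuine ambiguity deferred to the unambiguity hypothesis of the coherence theorem---is a legitimate sharpening of a subtlety the paper silently glosses over.
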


\noindent
The proof is by induction on $n$; the cases are all trivial.  We can now characterize the
relationship between different typings of $M$.

\begin{theorem}[Coherence of $\interp\cdot$]\label{thm:coherence}
  If $\De$ derives $\tys{M}{\sigma}$ and $\De'$ derives $\typ{P'}{\Gamma'}{A}{M}{\sigma'}$, where
  $\sigma$ and $\sigma'$ are unambiguous, then for all substitutions $S$ and $S'$ such that $S\,P
  \proveq S'\,P', S\,\Gamma = S'\,\Gamma'$, and $S\,\sigma = S'\sigma'$, and for all ground
  substitutions $U$, $\interp{\Delta}(U \circ S) = \interp{\Delta'}(U \circ S')$.
\end{theorem}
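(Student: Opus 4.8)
The plan is to establish the equality of the two interpretations as maps over ground instances, in two stages: first that both sides are defined over the same index set, and then that they agree at every index. For the domains, the interpretation $\interp{\Delta}(U\circ S)\eta$ is by construction a map whose index set is $\gr{((U\circ S)\,P \mid (U\circ S)\,\sigma)}_\Psi$, and similarly for $\Delta'$. From $S\,\sigma = S'\,\sigma'$ we get $(U\circ S)\,\sigma = (U\circ S')\,\sigma'$, and from $S\,P \proveq S'\,P'$ together with the closure of $\entails$ under substitution we get $(U\circ S)\,P \proveq (U\circ S')\,P'$. Since membership in $\gr{(\cdot \mid \cdot)}_\Psi$ is decided by a satisfiability test $\entails (P,S\,Q)$ that is insensitive to replacing a context by a logically equivalent one, the two index sets coincide. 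It therefore remains to fix an arbitrary ground instance $\upsilon$ in this common domain and prove that the two maps take the same value at $\upsilon$.

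For the pointwise comparison I would reduce each derivation to its structural core using Lemma~\ref{thm:quantifiers}. Each of $\Delta$ and $\Delta'$ factors as a principal subderivation---$\Delta_1$, respectively $\Delta'_1$, ending in the introduction or elimination rule fixed by the outermost syntactic form of $M$---followed by a tail of \trule{\I\then}, \trule{\E\then}, \trule{\Iforall}, and \trule{\Eforall} steps that only manipulate quantifiers and predicates. Part~(b) of the lemma then supplies, for the fixed $\upsilon$, \emph{unique} specialized substitutions $S_1$ and $S'_1$ with $\interp{\Delta}(U\circ S)\eta\,\upsilon = \interp{\Delta_1}S_1\eta\,\upsilon$ and $\interp{\Delta'}(U\circ S')\eta\,\upsilon = \interp{\Delta'_1}S'_1\eta\,\upsilon$. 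This is where the unambiguity of $\sigma$ and $\sigma'$ is essential: because $ftv(P)\subseteq ftv(\tau)$, the ground instance $\upsilon$ determines $S_1$ and $S'_1$ on every type variable occurring in the predicates, so there is no free-floating variable (as in the \texttt{show . read} example) whose instantiation could change the denotation without changing $\upsilon$. The goal is thereby reduced to showing $\interp{\Delta_1}S_1\eta\,\upsilon = \interp{\Delta'_1}S'_1\eta\,\upsilon$ for two cores of the same expression ending in the same structural rule.

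I would discharge this by induction on the structure of $M$, matching the cores rule-by-rule and appealing to the PCPO-frame axioms. For $M = x$ the variable clause of condition~5 evaluates both sides to the environment lookup $\eta(x)$, which agree because $S\,\Gamma = S'\,\Gamma'$ forces the environments to assign $x$ the same value; the constant case is immediate. For $M = \lambda y.N$ I would use extensionality (condition~4) to reduce equality of the two function values to agreement on an arbitrary argument $d$, then apply the $\lambda$-clause of condition~5 and the induction hypothesis for $N$ in the extended environment $\eta[y\mapsto d]$. The \mkwd{let} case combines a \trule{Let} step with the hypotheses for its two subexpressions, and the $\mu$ case follows because each side is the least fixed point of the same continuous functional built from the body, so that the uniqueness of least fixed points applies.

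The step I expect to be the main obstacle is the elimination case $M = M_0\,N_0$. After reduction to cores both interpretations are computed by the application clause of condition~5 through $T_{\cdot,\cdot}$, but the two derivations may instantiate the argument subexpression $N_0$ at different ground types, so one cannot simply invoke the induction hypothesis for $N_0$ at a single matched type. Reconciling this is exactly where the uniqueness clause of Lemma~\ref{thm:quantifiers}(b) and unambiguity must be used together: rather than comparing the two cores directly, I would route both through the common canonical specialization determined by $\upsilon$, showing that each core agrees with it and concluding by transitivity. Carefully threading the matched substitutions and environments through this case---and through the binders in the $\lambda$, \mkwd{let}, and $\mu$ cases---is where the argument demands the most bookkeeping.
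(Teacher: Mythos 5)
Your proposal follows essentially the same route as the paper's (very terse) proof sketch: induction over the structure of $M$, with Lemma~\ref{thm:quantifiers} used to strip the quantifier/predicate tail of each derivation down to its structural core---part~(a) licensing the inductive hypothesis and part~(b), together with unambiguity, pinning down the unique specializing substitutions at each ground instance. Your additional bookkeeping (the up-front domain-coincidence argument and the flagged difficulty in the application case) is a faithful elaboration of what the paper leaves implicit, not a departure from its method.
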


\noindent
The proof is by induction over the structure of $M$.  In each case, use of the inductive hypothesis
is justified by Lemma~\ref{thm:quantifiers}(a), and the conclusion derived from the definition of
$\interp\cdot$ and Lemma~\ref{thm:quantifiers}(b).  As an immediate corollary, we have that if $\De$
and $\De'$ are two derivations of the same typing judgment, then $\interp{\De} = \interp{\De'}$.  We
can also show that, if $\tys{M}{\sigma}$ is a principal typing of $M$, with derivation $\Delta$, and
$\Delta'$ derives $\tys{M}{\sigma'}$ for any other $\sigma'$, then for each substitution $S'$ there
is a unique $S$ such that, for all environments $\eta$, $\interp{\Delta}S\eta \supseteq
\interp{\Delta'}S'\eta$.

\begin{theorem}[Soundness of $\equiv$]\label{thm:equiv-sound}
  Given a class context $\Psi$, if $\sigma$ is unambiguous, $\eqs{M}{N}{\sigma}$, and $\De_M,\De_N$
  are derivations of $\typ{P}{\Gamma}{\Psi}{M}{\sigma},\typ{P}{\Gamma}{\Psi}{N}{\sigma},$ then
  $\interp{\De_M} = \interp{\De_N}$.
\end{theorem}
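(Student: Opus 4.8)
The plan is to proceed by induction on the derivation of $\eqs{M}{N}{\sigma}$, showing in each case that the derivations $\De_M$ and $\De_N$ of $\typ{P}{\Gamma}{\Psi}{M}{\sigma}$ and $\typ{P}{\Gamma}{\Psi}{N}{\sigma}$ satisfy $\interp{\De_M} = \interp{\De_N}$. The key simplification is that, by the coherence result (Theorem~\ref{thm:coherence}), the denotation of a term at a given unambiguous typing does not depend on the chosen derivation; so for each equality rule it suffices to exhibit one convenient pair of derivations whose interpretations coincide, and coherence then transports the conclusion to the given $\De_M,\De_N$. Coherence is also what makes the equivalence-closure rules go through: reflexivity is immediate, symmetry and the congruence rules follow from compositionality of $\interp\cdot$, and transitivity relies on coherence to give a single well-defined denotation for the intermediate term irrespective of how it is typed.

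For the computational rules I would lean on the PCPO-frame axioms together with a semantic substitution lemma stating $\interp{[N/x]M}S\eta = \interp{M}S(\eta[x \mapsto \interp{N}S\eta])$, proved by a routine induction on $M$. Given this lemma, rule \erule{$\beta$} follows from the $\lambda$-clause of frame condition~5, \erule{Let} follows directly, \erule{$\eta$} follows from the extensionality of condition~4 (equal application behaviour forces equal functions), and \erule{$\mu$} follows from the defining fixed-point equation $\fix(f) = f(\fix(f))$. The quantifier and predicate rules are settled by the same set-theoretic bookkeeping used in Lemma~\ref{thm:expr-type-soundness}: for \erule{\Iforall} we have $\gr{\forall t.\sigma} = \bigcup_{\tau \in GType} \gr{[\tau/t]\sigma}$ and the per-ground-type subproofs give agreement at each instantiation, so the unions of the interpretations agree; \erule{\Eforall} is agreement preserved under restriction to $\gr{[\tau/t]\sigma}$; and \erule{$\I\then$}, \erule{$\E\then$} only adjust which ground instances are included, leaving the underlying values untouched.

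The genuinely new and hardest case is \erule{Method}, which asserts $\eql{P}{\Gamma}{\tuple{A,Si,Im}}{x}{Im(x,d)}{S\,\sigma}$ whenever $Si(x) = \tuple{\pi,\sigma}$ and $d : P \entails S\,\pi$. Here the denotation of the left-hand side is just $\eta(x) = b_x$, the $x$-component of the fixed point $b = \bigsqcup f^n(\bot_\Sigma)$ computed in the class-context semantics, while the denotation of the right-hand side is the meaning of the implementation evaluated in the environment in which every method is bound to its own meaning. The plan is to unfold the fixed point, so that on the relevant ground instances $b_x$ is $\interp{\De_{x,d}}$, and then to reconcile the type $S\,\sigma$ at which the equation is stated with the instance scheme $\sigma_{x,d}$ at which $\De_{x,d}$ types the implementation; Lemma~\ref{thm:method-type-schemes} splits $\gr{\sigma_x}$ as $\bigcup_{\tuple{x,d'} \in \dom(Im)} \gr{\sigma_{x,d'}}$, and the ground instances of $S\,\sigma$ (those at which $d$ discharges $\pi$) lie within the $d$-contributed part, on which coherence identifies $\interp{\De_{x,d}}$ with the denotation of $Im(x,d)$.

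I expect the non-overlapping argument inside \erule{Method} to be the main difficulty, since it is where the otherwise purely compositional reasoning must interact with the global fixed-point definition of method meanings. One must use the hypothesis that $A$ is non-overlapping to guarantee that distinct instances never disagree on a shared ground instance, so that the union defining $f_x(b)$ behaves as a well-defined case split and its restriction to $\gr{S\,\sigma}$ isolates the single implementation $\De_{x,d}$ with no interference from other instances. Everything else reduces to the frame axioms, the substitution lemma, and the union/restriction structure already exercised in the proof of Theorem~\ref{thm:type-soundness}.
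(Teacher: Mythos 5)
Your proposal matches the paper's proof in all essentials: the same induction on the equality derivation, the same appeal to coherence (Theorem~\ref{thm:coherence}) to fix convenient typing derivations without loss of generality, the same union/restriction bookkeeping for \erule{\Iforall}, \erule{\Eforall}, \erule{$\I\then$}, \erule{$\E\then$}, and the same treatment of \erule{Method} via the fixed-point definition of method meanings restricted to $\gr{S\,\sigma}$, with non-overlapping instances isolating the single implementation. The only difference is that you spell out the routine cases ($\beta$, $\eta$, $\mu$, Let, and the equivalence-closure rules) that the paper dismisses as uninteresting, which is fine but not a departure.
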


\begin{proof}
  The proof is by induction over the derivation of $\eqs{M}{N}{\sigma}$.  The interesting cases are
  to do with polymorphism and overloading.
  \begin{itemize}
  \item \textit{Case \erule{$\then$ I}.} We have a derivation concluding
    \[
       \infbox{\iproof{\assume{\eql{P,\pi}{\Gamma}{\Psi}{M}{N}{\rho}}};
                      {\eqs{M}{N}{\pi \then \rho}}}
    \]
    Let $\De_M,\De_N$ be typing derivations of $\tys{M}{\pi \then \rho}$ and $\tys{N}{\pi \then
      \rho}$; without loss of generality (because of Theorem~\ref{thm:coherence}), assume that each
    is by \trule{\I\then}, with subderivations $\De_M',\De_N'$ of
    $\typ{P,\pi}{\Gamma}{\Psi}{M}{\rho}$ and $\typ{P,\pi}{\Gamma}{\Psi}{N}{\rho}$.  From the
    definition of $\interp\cdot$, we have $\interp{\De_M} = \interp{\De_M'}$ and $\interp{\De_N} =
    \interp{\De_N'}$.  The induction hypothesis gives that $\interp{\De_M'} = \interp{\De_N'}$, and
    so we can conclude $\interp{\De_M} = \interp{\De_N}$.
  \item \textit{Case \erule{$\then$ E}.} We have a derivation concluding
    \[
      \infbox{\iproof{\assume{\eqs{M}{N}{\pi \then \rho}}}
                     {\assume{P \entails_A \pi}};
                     {\eqs{M}{N}{\rho}}}
    \]
    where $\Psi = \tuple{A,Si,Im}$.  As in the previous case, the interpretation of the typing
    derivations for $\typ{P}{\Gamma}{\Psi}{M}{\rho}$ and $\typ{P}{\Gamma}{\Psi}{M}{\pi \then \rho}$
    are equal, and similarly for the typing derivations for $N$, and thus the induction hypothesis
    is sufficient for the desired conclusion.
  \item \textit{Case \erule{\Iforall}.} We have a derivation concluding
    \[
      \infbox{\irule{\assume{\{ (\eqs{M}{N}{[\tau/t]\sigma}) \mid \tau \in GType \}}};
                    {\eqs{M}{N}{\forall t. \sigma}}}
    \]
    From the induction hypothesis, we can conclude that, given derivations $\De_M^\tau$ of
    $\typ{P}{\Gamma}{\Psi}{M}{[\tau/t]\sigma}$ and $\De_N^\tau$ of
    $\typ{P}{\Gamma}{\Psi}{N}{[\tau/t]\sigma}$, $\interp{\De_M^\tau} = \interp{\De_N^\tau}$.  Let
    $\De_M$ derive $\typ{P}{\Gamma}{\Psi}{M}{\forall t. \sigma}$ (and, without loss of generality,
    assume $\De_M$ is by \trule{\I\forall}); we know that $\interp{\De_M} = \bigcup_{\tau \in GType}
    \interp{\De_M^\tau}$.  We argue similarly for derivations $\De_N$ of
    $\typ{P}{\Gamma}{\Psi}{N}{\forall t. \sigma},$ and conclude that $\interp{\De_M} =
    \interp{\De_N}.$
  \item \textit{Case \erule{\Eforall}.} We have a derivation concluding
    \[
      \infbox{\irule{\assume{\eqs{M}{N}{\forall t. \sigma}}};
                    {\eqs{M}{N}{[\tau/t]\sigma}}}
    \]
    Let $\De_M,\De_N$ be derivations that $M$ and $N$ have type $[\tau/t]\sigma$; without loss of
    generality, assume they are by \trule{\E\forall}, with subderivations $\De_M',\De_N'$ that $M$
    and $N$ have type $\forall t. \sigma$.  From the induction hypothesis, we know $\interp{\De_M'}
    = \interp{\De_N'}$, and from the definition of $\interp\cdot$ we know that $\interp{\De_M}
    \subseteq \interp{\De_M'}$ and $\interp{\De_N} \subseteq \interp{\De_N'}$.  Thus, we can
    conclude that $\interp{\De_M} = \interp{\De_N}$.
  \item \textit{Case \erule{Method}.} We have a derivation of the form
    \[
      \infbox{\irule{Si(x) = \pi,\sigma}{d:P \entails_A S\,\pi};
                    {\eql{P}{\Gamma}{\tuple{A,Si,Im}}{x}{Im(x,d)}{S\,\sigma}}}
    \]
    Let $\De_M$ be the derivation of $\typ{P}{\Gamma}{\Psi}{x}{S\,\sigma}.$ From the definition of
    $\interp\cdot$, we know that $\interp{\De_M} S \eta = \interp{\De_M'} S (\eta[\overline{x_i}
      \mapsto \overline{b_i}])$ where the $x_i$ are the class methods, the $b_i$ are their
    implementations, and $\De_M'$ is the derivation of
    $\typ{P}{\Gamma,x_i:\sigma_i}{A}{x}{S\,\sigma}$.  Since $x$ is a class method, we know that
    $\eta[\overline{x_i} \mapsto \overline{b_i}]$ maps $x$ to some method implementation $b_j$, and
    therefore that $\interp{\De_M'} \subseteq b_j$.  We also know that $b_j$ is the fixed point of a
    function $f_j(\tuple{b_1,\dots,b_n})S\eta = \bigcup_d \interp{\De_{x,d'}}S(\eta[\overline{x_i}
      \mapsto \overline{b_i}])$, where $\De_{x,d'}$ derives $\typ{P}{\Gamma}{A}{Im(x,d')}{\sigma_{x,d'}}$ and
    $d$ is one of the $d_i$.  Thus, we know that if $\De_N$ derives
    $\typ{P}{\Gamma}{\Psi}{Im(x,d)}{S\,\sigma}$, then $\interp{\De_N} \subseteq b_j$.  Finally, as
    $\interp{\De_M}$ and $\interp{\De_N}$ are defined over the same domain, we have that
    $\interp{\De_M} = \interp{\De_N}$.  \qedhere
  \end{itemize}
\end{proof}

\section{Improvement and Functional Dependencies}\label{sec:fundeps}

In the introduction, we set out several ways in which extensions of type class systems went beyond
the expressiveness of existing semantic approaches to overloading.  In this section, we return to
one of those examples, demonstrating the flexibility of our specialization-based approach to
type-class semantics.

Functional dependencies~\cite{Jones00} are a widely-used extension of type classes which capture
relationships among parameters in multi-parameter type classes.  Earlier, we gave a class !Elems! to
abstract over common operations on collections:
\begin{code}
class Elems c e | c -> e where
  empty :: c
  insert :: e -> c -> c
\end{code}
The functional dependency $\mathtt{c} \to \mathtt{e}$ indicates that the type of a collection (!c!)
determines the type of its elements (!e!).  Practically speaking, this has two consequences:
\begin{itemize}
\item A program is only valid if the instances in the program respect the declared functional
  dependencies.  For example, if a program already contained an instance which interpreted lists as
  collections:
\begin{code}
instance Elems [t] t where ...
\end{code}
  the programmer could not later add an instance that interpreted strings (lists of characters in
  Haskell) as collections of codepoints (for simplicity represented as integers):
\begin{code}
instance Elems [Char] Int
\end{code}
\item Given two predicates $\Elems{\tau}{\upsilon}$ and $\Elems{\tau'}{\upsilon'}$, if we know $\tau
  = \tau'$, then we must have $\upsilon = \upsilon'$ for both predicates to be satisfiable.
\end{itemize}

We now consider an extension of \oml{} to support functional dependencies.  Following
Jones~\cite{Jones95}, we introduce a syntactic characterization of improving substitutions, one way
of describing predicate-induced type equivalence.  We then extend the typing and equality judgments
to take account of improving substitutions.  Finally, we show that the extended systems are sound
with respect to our semantics.  Importantly, we do not have to extend the models of terms, nor do we
introduce coercions, or other intermediate translations.  We need only show that our
characterization of improving substitutions is sound to show that the resulting type equivalences
hold in the semantics.

\subsection{Extending \omlhead{} with Functional Dependencies}
\newcommand{\impr}[3]{#1 \vdash #2\,\text{improves}\,#3}

To account for the satisfiability of predicates in qualified types, Jones introduces the notion of
an improving substitution $S$ for a set of predicates $P$~\cite{Jones95}.  Intuitively, a $S$
improves $P$ if every satisfiable ground instance of $P$ is also a ground instance of $S\,P$.  Jones
uses improving substitutions to refine the results of type inference while still inferring principal
types.  We will adopt a similar approach, but in typing instead of type inference.

\paragraph{Syntax.}

We begin by extending the syntax of class axioms to include functional dependency assertions:
\[\begin{array}{lrr@{\hspace{5px}}c@{\hspace{5px}}l}
  \multicolumn{2}{l}{\text{Index sets}} & X,Y & \subseteq & \mathbb{N} \\
  \multicolumn{2}{l}{\text{Class axioms}} & \alpha & ::= & \fd{C}{X}{Y} \mid \qclause{d}{\seq{t}}{P}{\pi} \\
\end{array}\]
In the representation of functional dependency axioms, we treat the class parameters by index rather
than by name.  If $A$ were the axioms for the example above, we would expect to have a dependency
\[
  \fd{\mathtt{Elems}}{\{0\}}{\{1\}} \in A.
\]
Any particular class name may appear in many functional dependency assertions, or in none at all.
We adopt some notational abbreviations: if $X$ is an index set, we write $\pi =_X \pi'$ to indicate
that $\pi$ and $\pi'$ agree at least on those parameters with indices in $X$, and similarly write
$\pi \stackrel{S}{\sim}_X \pi'$ to indicate that $S$ is a unifier for those parameters of $\pi$ and
$\pi'$ with indices in $X$.

\paragraph{Improvement.}

To account for improvement in typing, we need a syntactic characterization of improving
substitutions. In the case of functional dependencies, this can be given quite directly.  We can
give an improvement rule as a direct translation of the intuitive description above:
\[
\infbox{\irule[\trule{Fundep}]
              {\begin{array}{c}
                 P \entails \predh{C}{\seq\tau} \isp
                 P \entails \predh{C}{\seq\upsilon} \\[-2px]
                 (\fd{C}{X}{Y}) \in A \isp
                 \seq\tau =_X \seq\upsilon \isp
                 \seq\tau \stackrel{S}{\sim}_Y \seq\upsilon
               \end{array}};
              {\impr{A}{S}{P}}}
\]
For example, if we have some $Q$ such that $Q \entails \Elems \tau \upsilon$ and $Q \entails \Elems
\tau {\upsilon'}$, then \trule{Fundep} says that the any unifying substitution $U$ such that
$U\,\upsilon = U\,\upsilon'$ is an improving substitution for $Q$.  If $S$ is an improving
substitution for $P$, then the qualified type schemes $(P \mid \sigma)$ and $(S\,P \mid S\,\sigma)$
are equivalent, and we should be able to replace one with the other at will in typing derivations.
One direction is already possible: if a term has type $\sigma$, then it is always possible to use it
with type $S\,\sigma$ (by a suitable series of applications of \trule{\Iforall} and
\trule{\Eforall}).  On the other hand, there is not (in general) a way with our existing typing
rules to use a term of type $S\,\sigma$ as a term of type $\sigma$.  We add a typing rule to support
this case.
\[
\infbox{\irule[\trule{Impr}]
              {\typ{S\,P}{S\,\Gamma}{A}{M}{S\,\sigma}}
              {\impr{A}{S}{P}};
              {\tys{M}{\sigma}}}
\]
As in the case of \trule{\I\then} and \trule{\E\then}, \trule{Impr} has no effect on the semantics
of terms.  Thus, if we have a derivation
\[
  \infbox{\lproof[\De]
                 {\lproof[\De_1]
                         {\vdots};
                         {\typ{S\,P}{S\,\Gamma}{A}{M}{S\,\sigma}}}
                 {\assume{\impr{A}{S}{P}}};
                 {\tys{M}{\sigma}}}
\]
we define that
\(
  \interp{\De}S'\eta = \interp{\De_1}S''\eta,
\)
where $S'' \circ S = S'$ (the existence of such an $S''$ is guaranteed by the soundness of
\trule{Fundep}).  Finally, we add a rule to the equality judgment allowing us to use improving
substitutions in equality proofs.
\[
\infbox{\irule[\erule{Impr}]
              {\eql{S\,P}{S\,\Gamma}{\tuple{A,Si,Im}}{M}{N}{S\,\sigma}}
              {\impr{A}{S}{P}};
              {\eql{P}{\Gamma}{\tuple{A,Si,Im}}{M}{N}{\sigma}}}
\]

\paragraph{Validating Functional Dependency Axioms.}

We must augment the context rule to check that the axioms respect the declared dependencies.  This
can be accomplished by, first, refining the overlap check to assure that no axioms overlap on the
determining parameters of a functional dependencies, and second, requiring that, for each dependency
$\fd{C}{X}{Y}$ and each instance $P \then \pi$ of class $C$, any variables in the positions $Y$ are
determined by the functional dependencies of $P$.  Our formalization of the latter notion follows
Jones's development~\cite{Jones08}.  We define the closure of a set of variables $J$ with respect to
the functional dependencies $F$ as the least set $J^+_F$ such that
\begin{itemize}
\item $J \subseteq J^+_F$; and
\item If $U \tr V \in F$ and $U \subseteq J^+_F$, then $V \subseteq J^+_F$.
\end{itemize}
We write $ftv_X(\predh{C}{\seq\tau})$ to abbreviate $\bigcup_{x \in X}ftv(\tau_x)$, define the
instantiation of a functional dependency assertion $\fd{C}{X}{Y}$ at a predicate
$\pi = \predh{C}{\seq\tau}$, as the dependency $ftv_X(\pi) \tr ftv_Y(\pi),$ and write $fd(A,P)$ for the
set of the instantiation of each functional dependency assertion in $A$ at each predicate in $P$.
We can now define the verification conditions for axioms and the new version of \trule{Ctxt}, as
follows.
\begin{gather*}
\infbox{\irule{\{ \pi \nsim_X \pi' \mid (d: P \then \pi), (d': P' \then \pi'), (\fd{class(\pi)}{X}{Y}) \in A \}};
              {\nono{A}}}
\\
\infbox{\irule{\{ ftv(\pi_Y) \subseteq ftv(\pi_X)^+_{fd(A,P)} \mid (d:P \then \pi),(\fd{class(\pi)}{X}{Y}) \in A \}};
              {\cov{A}}}
\\
\infbox{\irule[\trule{Ctxt}]
              {\!\!\!\!
               \begin{array}{c}
                 \nono{A}
                 \isp
                 \cov{A}
                 \\[1px]
                 \{ (\typ{P}{\Gamma, \overline{x_i : \sigma_{x_i}}}{A}{Im(y,d)}{\sigma_{y,d}}) \mid \tuple{y,d} \in \dom(Im) \}
                 \\[1px]
                 \typ{P}{\Gamma, \overline{x_i : \sigma_{x_i}}}{A}{M}{\sigma}
               \end{array}};
              {\typ{P}{\Gamma}{\tuple{A,Si,Im}}{M}{\sigma}}}
\end{gather*}

\subsection{Soundness}

The significant challenge in proving soundness of the extended rules is showing that when
$\impr{A}{S}{P}$ is derivable, $S$ is an improving substitution for $P$.  Once we have established
that result, the remaining soundness results will be direct.  We introduce notation for the
satisfiable ground instances of predicates~$P$:
\[
  \gr{P}_A = \{ S\,P \mid S \in GSubst(ftv(P)), \entails_A S\,P \}.
\]
We can now formally describe an improving substitution.
\begin{lemma}\label{thm:sound-impr}
  Given a set of axioms~$A$ such that \nono{A} and \cov{A}, if $\impr{A}{S}{P}$, then $\gr{P}_A =
  \gr{S\,P}_A$.
\end{lemma}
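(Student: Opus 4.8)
The plan is to prove the two inclusions $\gr{S\,P}_A \subseteq \gr{P}_A$ and $\gr{P}_A \subseteq \gr{S\,P}_A$ separately. The first is immediate and uses nothing beyond the definitions: any element of $\gr{S\,P}_A$ has the form $T'\,(S\,P)$ for a ground $T'$ with $\entails_A T'\,(S\,P)$, and the composite $T' \circ S$ is itself a ground substitution on $ftv(P)$ with $(T' \circ S)\,P = T'\,(S\,P)$, so this same (satisfiable) predicate set lies in $\gr{P}_A$. All the real content is in the reverse inclusion, which is precisely the assertion that $\impr{A}{S}{P}$ makes $S$ a genuine improving substitution: every satisfiable ground instance of $P$ is already a ground instance of $S\,P$.

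I would isolate the technical core as a \emph{functional-dependency soundness} property: if $(\fd{C}{X}{Y}) \in A$ and $\predh{C}{\seq a}$, $\predh{C}{\seq b}$ are satisfiable ground predicates (that is, $\entails_A \predh{C}{\seq a}$ and $\entails_A \predh{C}{\seq b}$) with $\seq a =_X \seq b$, then $\seq a =_Y \seq b$. Granting this, the hard inclusion is quick. Fix $T\,P \in \gr{P}_A$. From the \trule{Fundep} premises $P \entails \predh{C}{\seq\tau}$ and $P \entails \predh{C}{\seq\upsilon}$, closure of entailment under substitution, $\entails_A T\,P$, and transitivity give $\entails_A \predh{C}{T\,\seq\tau}$ and $\entails_A \predh{C}{T\,\seq\upsilon}$; and $\seq\tau =_X \seq\upsilon$ yields $T\,\seq\tau =_X T\,\seq\upsilon$. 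FD soundness then gives $T\,\seq\tau =_Y T\,\seq\upsilon$, so $T$ unifies $\seq\tau$ and $\seq\upsilon$ on the positions $Y$. Taking $S$ to be a most general such unifier (as recorded by $\seq\tau \stackrel{S}{\sim}_Y \seq\upsilon$), $T$ factors through $S$ as $T = T'' \circ S$, whence $T\,P = T''\,(S\,P)$ exhibits $T\,P$ as a satisfiable ground instance of $S\,P$, so $T\,P \in \gr{S\,P}_A$.

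For the FD soundness property I would reason from the shape of the entailment derivations. Each of $\entails_A \predh{C}{\seq a}$ and $\entails_A \predh{C}{\seq b}$ must conclude with \trule{Axiom}, say from $(\qclause{d}{\seq u}{Q'}{\pi'}) \in A$ via a ground $S_1$ with $S_1\,\pi' = \predh{C}{\seq a}$, $\entails_A S_1\,Q'$, and analogously $d'$, $S_2$ for $\predh{C}{\seq b}$. Since $\seq a =_X \seq b$, the two instantiated heads agree on the determining positions $X$, so the axiom heads are unifiable on $X$; by \nono{A} distinct axioms of $C$ have heads non-unifiable on $X$, forcing $d = d'$. Thus both predicates instantiate one head $\pi' = \predh{C}{\seq\rho}$ with $S_1\,\seq\rho =_X S_2\,\seq\rho$, and (matching first-order terms structurally) $S_1$ and $S_2$ agree on $ftv_X(\pi')$. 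Now \cov{A} gives $ftv_Y(\pi') \subseteq ftv_X(\pi')^{+}_{fd(A,Q')}$, i.e. every $Y$-position variable is reachable from the $X$-position variables through the dependencies induced by $Q'$. I would show by induction on the closure construction that $S_1$ and $S_2$ agree on all of $ftv_X(\pi')^{+}_{fd(A,Q')}$: the base case is the agreement just established, and a closure edge $U \tr V$, arising by instantiating some $(\fd{C'}{X'}{Y'}) \in A$ at a predicate $\pi'' \in Q'$, propagates agreement on $U = ftv_{X'}(\pi'')$ to agreement on $V = ftv_{Y'}(\pi'')$ by applying FD soundness to the satisfiable ground predicates $S_1\,\pi''$ and $S_2\,\pi''$. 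In particular $S_1, S_2$ agree on $ftv_Y(\pi')$, giving $S_1\,\seq\rho =_Y S_2\,\seq\rho$, i.e. $\seq a =_Y \seq b$.

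The main obstacle is exactly this nested structure: the propagation step along a closure edge re-invokes FD soundness on the context predicates $Q'$, so the property is not proved by a single flat induction. I would resolve this by carrying out the entire FD soundness argument by well-founded induction on a measure that strictly decreases on the recursive appeal—for instance the combined size of the two \trule{Axiom} derivations, since $S_1\,Q'$ and $S_2\,Q'$ are proved by strict subderivations of those for $\predh{C}{\seq a}$ and $\predh{C}{\seq b}$—with the closure induction nested inside. Pinning down this measure and the interaction between the outer derivation-induction and the inner closure-induction is the delicate part; the remaining steps (substitution-closure and transitivity of entailment, stability of $=_X$ under substitution, and the most-general-unifier factoring) are routine.
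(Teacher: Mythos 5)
Your proposal is correct and follows essentially the same route as the paper's own proof: the paper also reduces the lemma (implicitly, in its ``then we must have $\pi_0,\pi_1$'' step) to the property that satisfiable ground predicates of a class agreeing on the determining positions must agree on the determined ones, and establishes that property by induction on the entailment derivations, appealing to \nono{A} when the two predicates arise from distinct axioms and to \cov{A} together with recursion through the instance context when they arise from a single axiom. The only differences are presentational: the paper argues by contradiction and leaves implicit the steps you spell out (the easy inclusion, the most-general-unifier factoring of the ground substitution through $S$, and the closure induction inside the covering case).
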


\begin{proof}
By contradiction.  Assume that $\impr{A}{S}{P}$; then we must have $\pi_0,\pi_1$ such that
$\entails_A \pi_0, \entails_A \pi_1$ and there is a functional dependency
\(
  (\fd{class(\pi)}{X}{Y}) \in A
\)
such that $\pi_0 =_X \pi_1$ but $\pi_0 \not=_Y \pi_1$.  We proceed by induction on the heights of
the derivations of $\entails_A \pi_0, \entails_A \pi_1$.
\begin{itemize}
\item There are distinct axioms $\clause{d}{P}{\pi_0'},\clause{d'}{P'}{\pi_1'} \in A$ and
  substitutions $S_0,S_1$ such that $S_0\,\pi_0' = \pi_0$ and $S_1\,\pi_1' = \pi_1'$.  But then $S_0
  \circ S_1$ is a unifier for $\pi_0' \sim_X \pi_1'$, contradicting $\nono{A}$.
\item There is a single axiom $\clause{d}{P}{\pi_0'}$ and substitutions $S_0,S_1$ such that
  $S_0\,\pi_0' = \pi_0$ and $S_1\,\pi_0' = \pi_1$.  We identify two sub-cases.
  \begin{itemize}
  \item There is some type variable in $ftv_Y(\pi_0') \setminus ftv_X(\pi_0')$ that is not
    constrained by $P$.  This contradicts $\cov{A}$.
  \item There is some $\pi' \in P$ such that $S_0\,\pi'$ and $S_1\,\pi'$ violate a functional
    dependency of $class(\pi')$.  The derivations of $\entails S_0\,\pi'$ and $\entails S_1\,\pi'$
    must be shorter than the derivations of $\entails \pi_0,\entails \pi_1$, and so we have the desired result by
    induction. \qedhere
  \end{itemize}
\end{itemize}
\end{proof}

\begin{theorem}[Soundness of typing]\label{thm:type-sound-impr}
   Given a class context $\Psi$, if $\De$ is a derivation of $P \mid \Gamma \vdash_\Psi M : \sigma$,
   $S$ is a substitution, and $\eta$ is an $(S\,\Gamma)$-environment, then $\interp{\De}S\eta \in
   \interp{(S\,P \mid S\,\sigma)}_\Psi$.
\end{theorem}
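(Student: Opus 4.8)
The plan is to reprise the proof of Theorem~\ref{thm:type-soundness} almost verbatim, since the addition of functional dependencies perturbs the argument in exactly one place. That proof combined the expression-level soundness result (Lemma~\ref{thm:expr-type-soundness}, an induction on the derivation of $\tys{M}{\sigma}$) with the method-interpretation lemmas (Lemmas~\ref{thm:method-type-schemes} and~\ref{thm:method-types}) at the \trule{Ctxt} rule. Both method lemmas and the reassembly carry over unchanged: neither the entailment relation $\entails_A$ nor the ground-instance operation $\gr\cdot$ is altered by the new $\fd{C}{X}{Y}$ axioms (these are consulted only by \trule{Fundep}), and the extra side-conditions $\nono{A}$ and $\cov{A}$ merely restrict which axiom sets are admissible. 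Thus the whole development goes through once I extend the induction of Lemma~\ref{thm:expr-type-soundness} with the single new case for \trule{Impr}.

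For that case, suppose $\De$ concludes with \trule{Impr}, from a subderivation $\De_1$ of $\typ{R\,P}{R\,\Gamma}{A}{M}{R\,\sigma}$ together with a premise $\impr{A}{R}{P}$. Fix a ground substitution $S$ and an $(S\,\Gamma)$-environment $\eta$; I must show $\interp{\De}S\eta \in \interp{(S\,P \mid S\,\sigma)}_\Psi$. If $\entails_A S\,P$ fails then $\gr{(S\,P \mid S\,\sigma)}_\Psi$ is empty and the membership is immediate. Otherwise $\entails_A S\,P$, and by the definition of $\interp\cdot$ for \trule{Impr} we have $\interp{\De}S\eta = \interp{\De_1}S''\eta$ for a substitution $S''$ with $S'' \circ R = S$. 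Then $S''\,(R\,\Gamma) = S\,\Gamma$, so $\eta$ is also an $(S''\,(R\,\Gamma))$-environment and the induction hypothesis applies to $\De_1$ at $S''$, yielding $\interp{\De_1}S''\eta \in \interp{(S''\,R\,P \mid S''\,R\,\sigma)}_\Psi = \interp{(S\,P \mid S\,\sigma)}_\Psi$, which is exactly what is required.

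The crux, and the step I expect to be the main obstacle, is the existence of the factoring $S = S'' \circ R$ in the satisfiable case; this is precisely where the soundness of improvement (Lemma~\ref{thm:sound-impr}) is needed. From $\entails_A S\,P$ and the two entailment premises of the \trule{Fundep} instance that produced $R$, closure of entailment under substitution gives two satisfiable ground $C$-predicates $S\,\predh{C}{\seq\tau}$ and $S\,\predh{C}{\seq\upsilon}$ that agree on the determining indices $X$; the reasoning behind Lemma~\ref{thm:sound-impr} (using $\nono{A}$ and $\cov{A}$) then forces them to agree on the determined indices $Y$. Hence $S$ unifies exactly the parameters that $R$ unifies, and since $R$ may be taken most general among such unifiers, $S$ factors through it. Once this factoring is in hand the \trule{Impr} case closes mechanically, and with the method lemmas and \trule{Ctxt} reassembly untouched, the theorem follows.
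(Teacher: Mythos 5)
Your proposal takes essentially the same route as the paper's proof: reduce to the single new case \trule{Impr} (the \trule{Ctxt} side-conditions and method lemmas being unaffected), invoke Lemma~\ref{thm:sound-impr} so that the interpretations $\interp{(P \mid \sigma)}_\Psi$ and $\interp{(S\,P \mid S\,\sigma)}_\Psi$ coincide, and close by the induction hypothesis. Your explicit construction of the factoring $S = S'' \circ R$ (including the remark that $R$ should be taken most general) is precisely the content the paper compresses into its one-line appeal to the soundness of \trule{Fundep} in the definition of $\interp\cdot$ at \trule{Impr}, so the two arguments agree in substance.
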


\begin{proof}
  We need only consider the \trule{Impr} case.  From Lemma~\ref{thm:sound-impr}, we have that if $T$
  improves $P$, then $\interp{(P \mid \sigma)}_\Psi = \interp{(T\,P \mid T\,\sigma)}_\Psi$, and so
  the result follows from the induction hypothesis.
\end{proof}

We extend our notion of ambiguity to take account of functional dependencies: it is enough for the
variables in the predicates~$P$ to be determined by the variables of $\tau$.

\begin{defn}\label{def:ambiguous-impr}
  A type scheme $\sigma = \forall \vec t. P \then \tau$ is unambiguous (given class axioms~$A$) if
  $ftv(P) \subseteq ftv(\tau)^+_{fd(A,P)}$.
\end{defn}

\noindent
The previous definition of ambiguity is a special case of this definition, where $fd(A,P)$ is always
empty.  As uses of \trule{Impr} do not affect the semantics of terms, its introduction does not
compromise coherence.

\begin{theorem}\label{thm:coherence-impr}
  If $\sigma$ is unambiguous and $\De_1,\De_2$ are derivations of $P \mid \Gamma \vdash_\Psi M:
  \sigma$, then $\interp{\De_1} = \interp{\De_2}$.
\end{theorem}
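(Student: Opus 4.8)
The plan is to mirror the proof of Theorem~\ref{thm:coherence} (Coherence of $\interp\cdot$), adapting it to the system extended with functional dependencies. As there, the argument proceeds by induction on the structure of $M$: a derivation of $\tys{M}{\sigma}$ must consist of the introduction or elimination rule corresponding to the outermost syntactic form of $M$, followed by a sequence of the ``structural'' rules that manipulate quantifiers and predicate contexts, and Lemma~\ref{thm:quantifiers} is used to peel off that sequence and reduce to the core derivation. The only genuinely new ingredient is the rule \trule{Impr}, which may now appear among those structural rules; since \trule{Impr} is defined to be semantically transparent, it behaves like \trule{\E\then} and \trule{\Eforall}, and the work is concentrated in showing that it is compatible with Lemma~\ref{thm:quantifiers}.

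First I would extend Lemma~\ref{thm:quantifiers} so that the chain $\De_2 \dots \De_n$ may contain applications of \trule{Impr} alongside \trule{\I\then}, \trule{\E\then}, \trule{\Iforall}, and \trule{\Eforall}. Consider the step in which $\De_{i+1}$ is formed from $\De_i$ by \trule{Impr} with improving substitution $S_0$, so that $\De_i$ types $M$ under the refined judgment \typ{S_0\,P_{i+1}}{S_0\,\Gamma}{A}{M}{S_0\,\sigma_{i+1}}. For part~(a) I would compose $S_0$ into the relating substitution delivered by the induction hypothesis; the equivalence $P \cup Q \proveq S\,P_1$ survives because Lemma~\ref{thm:sound-impr} yields $\gr{(P_{i+1} \mid \sigma_{i+1})} = \gr{(S_0\,P_{i+1} \mid S_0\,\sigma_{i+1})}$, i.e.\ improvement leaves the satisfiable ground instances fixed. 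For part~(b), the defining equation $\interp{\De_{i+1}}S'\eta = \interp{\De_i}S''\eta$ with $S'' \circ S_0 = S'$ shows that \trule{Impr} changes neither the value at any ground instance nor (again by Lemma~\ref{thm:sound-impr}) the domain over which the interpretation is defined, so the unique witness of part~(b) is merely reindexed through $S_0$.

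With the extended lemma available, the induction over the structure of $M$ runs exactly as in Theorem~\ref{thm:coherence}: both derivations $\De_1$ and $\De_2$ reduce to core derivations of the same syntactic shape, whose interpretations agree at each ground instance by the induction hypothesis together with the definition of $\interp\cdot$. The essential role of the revised unambiguity condition (Definition~\ref{def:ambiguous-impr}) is to keep the witnessing substitution $S'$ of Lemma~\ref{thm:quantifiers}(b) unique: a variable of $P$ that does not occur in $\tau$ but lies in $ftv(\tau)^+_{fd(A,P)}$ is still determined, now through the functional dependencies rather than by a syntactic occurrence. I expect this uniqueness argument to be the main obstacle. Under the old criterion $ftv(P) \subseteq ftv(\tau)$ the ground instance $\upsilon$ fixes every free variable directly, whereas under the relaxed criterion the ``hidden'' variables are pinned down only indirectly, via the entailment $\entails_A S'\,P$ and the functional dependencies, and it is precisely Lemma~\ref{thm:sound-impr} that guarantees these indirectly-determined instantiations are unique. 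Once that uniqueness is in place, the remainder of the proof is a routine re-run of the earlier coherence argument.
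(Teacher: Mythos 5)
Your overall strategy is the same as the paper's: the paper disposes of this theorem with the single observation that \trule{Impr} does not affect the semantics of terms, so the argument for Theorem~\ref{thm:coherence} carries over, and your plan of extending Lemma~\ref{thm:quantifiers} to chains containing \trule{Impr} is the right way to make that observation precise. Your treatment of part~(b) and your explanation of why Definition~\ref{def:ambiguous-impr} preserves uniqueness of the witnessing substitution (hidden variables being pinned down through the dependencies, via Lemma~\ref{thm:sound-impr}) are correct, and these are the essential points.

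There is, however, one step that fails as written: the claim that part~(a) survives by ``composing $S_0$ into the relating substitution.'' \trule{Impr} is the one rule in the chain that runs \emph{against} the instance ordering: its premise types $M$ at $S_0\,\sigma_{i+1}$ and its conclusion types $M$ at $\sigma_{i+1}$, so across an \trule{Impr} step we have $\tau_i = S_0\,\tau_{i+1}$, not $\tau_{i+1} = S_0\,\tau_i$ (and likewise $\Gamma$ is no longer constant along the chain). Combined with the induction hypothesis $\tau_i = S_i\,\tau_1$, this yields only $S_0\,\tau_{i+1} = S_i\,\tau_1$, from which no substitution $S'$ with $\tau_{i+1} = S'\,\tau_1$ need exist; no composition of the available substitutions produces one. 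Concretely: a core derivation may give $M$ the type $e \to e \to c$ under the context $\predh{\mathtt{Elems}}{c\,e}$, and \trule{Impr} with $S_0 = [e/e']$ then types $M$ at $e \to e' \to c$ under $(\predh{\mathtt{Elems}}{c\,e},\, \predh{\mathtt{Elems}}{c\,e'})$; the resulting type is not a substitution instance of $e \to e \to c$, and producing such non-instances is precisely the purpose of the rule. The repair is to restate part~(a) modulo improvement---there exist $S$ and an improving substitution $S_0$ such that $S_0\,\tau = S\,\tau_1$ and $S_0(P \cup Q) \proveq S\,P_1$---or to state it directly on satisfiable ground instances, as you in effect already do for the $\proveq$ half; correspondingly, the structural induction must relate the core judgments of the two derivations up to improvement rather than up to syntactic instantiation. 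Since part~(b), which is what the coherence induction actually consumes, is unaffected by this, the repair is local; but the extended lemma as you state it is false.
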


\begin{theorem}[Soundness of $\equiv$]\label{thm:equiv-sound-impr}
  Given a class context $\Psi$, if $\sigma$ is unambiguous, $\eqs{M}{N}{\sigma}$, and $\De_M,\De_N$
  are derivations of $\typ{P}{\Gamma}{\Psi}{M}{\sigma},\typ{P}{\Gamma}{\Psi}{N}{\sigma}$, then
  $\interp{\De_M}_\Psi = \interp{\De_N}_\Psi$.
\end{theorem}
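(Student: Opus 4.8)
The plan is to prove the statement by induction on the derivation of $\eqs{M}{N}{\sigma}$, exactly as in the proof of Theorem~\ref{thm:equiv-sound}. The extension with functional dependencies adds only the typing rule \trule{Impr}, the equality rule \erule{Impr}, and the refined side conditions \nono{A} and \cov{A} on \trule{Ctxt}; crucially, none of these alters the interpretation of terms. Consequently every case already treated for Theorem~\ref{thm:equiv-sound}---the rules \erule{$\beta$}, \erule{$\eta$}, \erule{$\mu$}, \erule{Let}, \erule{Method}, the quantifier rules \erule{\Iforall} and \erule{\Eforall}, and the qualifier rules \erule{$\then$ I} and \erule{$\then$ E}---carries over verbatim, since the modification to \trule{Ctxt} only imposes additional validity conditions and leaves the computation of $\interp\cdot$ untouched. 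I would therefore cite that proof and concentrate on the single genuinely new case, \erule{Impr}.

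For the \erule{Impr} case we have a derivation concluding $\eql{P}{\Gamma}{\Psi}{M}{N}{\sigma}$ from a premise $\eql{S\,P}{S\,\Gamma}{\Psi}{M}{N}{S\,\sigma}$ under the side condition $\impr{A}{S}{P}$. By Lemma~\ref{thm:sound-impr}, $S$ is a genuine improving substitution for $P$, so $\gr{(P \mid \sigma)}_\Psi = \gr{(S\,P \mid S\,\sigma)}_\Psi$ and hence $\interp{(P \mid \sigma)}_\Psi = \interp{(S\,P \mid S\,\sigma)}_\Psi$; in particular $\interp{\De_M}$ and the interpretation of any derivation at the improved scheme share a common domain. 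Invoking coherence (Theorem~\ref{thm:coherence-impr}), I would assume without loss of generality that the given $\De_M$ and $\De_N$ each conclude with \trule{Impr} using this same $S$, with immediate subderivations $\De_M'$ and $\De_N'$ of $\typ{S\,P}{S\,\Gamma}{\Psi}{M}{S\,\sigma}$ and $\typ{S\,P}{S\,\Gamma}{\Psi}{N}{S\,\sigma}$, which exist because the equality premise is derivable only for terms typeable at $S\,\sigma$. The induction hypothesis then yields $\interp{\De_M'} = \interp{\De_N'}$. By the definition of $\interp\cdot$ on \trule{Impr}, namely $\interp{\De_M}S'\eta = \interp{\De_M'}S''\eta$ with $S'' \circ S = S'$, and identically for $N$, the interpretations of $\De_M$ and $\De_N$ are obtained by one and the same reindexing of two maps already known to be equal; hence $\interp{\De_M} = \interp{\De_N}$, as required.

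The main obstacle is making the induction hypothesis legitimately applicable in this case: the theorem is stated only for unambiguous $\sigma$, so I must verify that $S\,\sigma$ is again unambiguous whenever $\sigma$ is (in the sense of Definition~\ref{def:ambiguous-impr}) and $S$ improves $P$. Writing $\sigma = \forall \seq t. P \then \tau$, this amounts to showing that $ftv(P) \subseteq ftv(\tau)^+_{fd(A,P)}$ entails $ftv(S\,P) \subseteq ftv(S\,\tau)^+_{fd(A,S\,P)}$, that is, that instantiation of functional-dependency assertions and the closure operation commute with substitution in the expected fashion---a property inherited from Jones's treatment of the closure~\cite{Jones08}. Once that commutation is in hand, the reindexing argument above is purely mechanical, and all remaining cases are identical to those of Theorem~\ref{thm:equiv-sound}.
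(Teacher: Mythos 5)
Your proof follows the paper's own argument essentially verbatim: induction on the equality derivation, with all cases from Theorem~\ref{thm:equiv-sound} carrying over unchanged and only \erule{Impr} genuinely new, where the fact that \trule{Impr} does not alter interpretations lets the induction hypothesis close the case. Your extra observation---that one must check $S\,\sigma$ remains unambiguous for the induction hypothesis to apply, resolved by substitution-stability of the closure $(\cdot)^+_{fd(A,P)}$---is a legitimate subtlety that the paper's three-sentence proof silently glosses over, and your treatment of it is sound.
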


\begin{proof}
  Again, we need consider only the \erule{Impr} case.  Without loss of generality, assume $\De_M$
  and $\De_N$ are by \trule{Impr}, with subderivations $\De_M'$ and $\De_N'$.  As the
  interpretations of $\De_M$ and $\De_N$ are equal to the interpretations of $\De_M'$ and $\De_N'$,
  the result follows from the induction hypothesis.
\end{proof}

\section{Related Work}\label{sec:related}

The semantics of polymorphism, in its various forms, has been studied extensively over the past half
century; however, the particular extensions of Haskell that motivated this work are recent, and have
received little formal attention.

Our approach was inspired by Ohori's semantics of Core ML~\cite{Ohori89}.  While Ohori's approach
describes the semantics of polymorphism, he does not represent polymorphic values directly, which
leads to an unusual treatment of the typing of !let! expressions.  Harrison extends Ohori's approach
to treat polymorphic recursion~\cite{Harrison05}; in doing so, he provides a representation of
polymorphic values.  Harrison suggests that his approach could be applied to type classes as well.

Ohori's approach to the semantics of ML is somewhat unusual; more typical approaches include those
of Milner~\cite{Milner78} and Mitchell and Harper~\cite{MitchellHarper88}.
%Milner's semantics
%relies on embedding Core ML expressions into an untyped language.  Mitchell and Harper embed core ML
%expressions into a second-order $\lambda$-calculus with explicit type abstraction and application,
%in the style of Girard's System~F~\cite{Girard90} or Reynold's polymorphic
%$\lambda$-calculus~\cite{Reynolds74}.
Ohori identifies reasons to prefer his approach over either
that of Milner or that of Mitchell and Harper: both approaches use a semantic domain with far more
values than correspond to values of ML, either because (in the untyped case) those values would not
be well-typed, or (in the explicit typed case) they differ only in the type-level operations.

The semantics of type-class-based overloading has also received significant attention.  Wadler and
Blott~\cite{WadlerBlott89} described the meaning of type classes using a dictionary-passing
translation, in which overloaded expressions are parameterized by type-specific implementations of
class methods.  Applying their approach to the full Haskell language, however, requires a target
language with more complex types than their source language.  For example, in translating the
!Monad! class from the Haskell prelude,
% \begin{code}
% class Monad m where
%   return :: a -> m a
%   (>>=)  :: m a -> (a -> m b) -> m b
% \end{code}
the dictionary for \predh{\texttt{Monad}}{\tau} must contain polymorphic values for the !return! and
!(>>=)! methods.
% In their formal treatment, Wadler and Blott give a form of class and instance
% declaration that allows local instances.  While this does not pose problems with their semantics, it
% results in the loss of principal types.

In his system of qualified types~\cite{Jones92}, Jones generalized the treatment of evidence by
translating from a language with overloading (OML) to a language with explicit evidence abstraction
and application.
% Jones highlights another difficulty of translation-based approaches: that, as
% evidence abstraction and application are not syntax directed, there can be many distinct
% translations of a single \oml{} term.  He then shows that, given suitable assumptions on the
% predicate system, if a term has an unambiguous type, then its translations are all
% equivalent~\cite{Jones93Coherence}.
Jones does not provide a semantics of the language with
explicit evidence abstraction and application; indeed, such a semantics could not usefully be
defined without choosing a particular form of predicate, and thus a particular form of evidence.

Odersky, Wadler and Wehr~\cite{Odersky95} propose an alternative formulation of overloading,
including a type system and type inference algorithm, and a ideal-based semantics of qualified
types.  However, their approach requires a substantial restriction to the types of overloaded
values
% ---each must be of the form $t \to \tau$, where variable $t$ is the constrained variable.
% This approach
which rules out many functions in the Haskell prelude
%---such as the casting function !fromIntegral! and parsing function !read!---
as well as the examples from our previous work~\cite{Morris10}.

\newcommand{\SystemFC}{System~$\mathrm{F_C}$}

Jones~\cite{Jones00} introduced functional dependencies in type classes, and discusses their use to
improve type inference; his presentation of improvement is similar to ours, but he does not augment
typing as does our \trule{Impr} rule.  Sulzmann et al.~\cite{Sulzmann07} give an alternative
approach to the interaction of functional dependencies and type inference, via a translation into
constraint-handling rules; unfortunately, their presentation conflates properties of their
translation, such as termination, with properties of the relations themselves.
\SystemFC{}~\cite{Sulzmann07b} extends System~F with type-level equality constraints and
corresponding coercion terms.  While we are not aware of any formal presentation of functional
dependencies in terms of \SystemFC{}, we believe that a formulation of our \trule{Fundep} rule in
terms of equality constraints is possible.  In contrast to our approach, \SystemFC{} requires
extending the domain of the semantics, while still requiring translation of source-level features
(functional dependencies or GADTs) into features of the semantics (equality constraints).
%Finally, \SystemFC{} does not address non-parametric
%behavior, such as that introduced by overlapping instances.

\section{Conclusion}\label{sec:conclusion}

% Conventional approaches to the semantics of implicit overloading rely on translation to
% explicitly-typed $\lambda$-calculi with higher-order polymorphism.  We have demonstrated several
% difficulties with that approach: first, that properties of the overloaded languages do not
% necessarily align with properties of their explicitly-typed translations, and second, that natural
% extensions of type classes cannot be expressed using typical translations.
We have proposed an alternative approach to the semantics of overloading, based on interpreting
polymorphic values as sets of their monomorphic interpretations, which avoids several problems with
traditional translation-based approaches.  We have applied this result to a simple overloaded
calculus, and shown the soundness of its typing and equality judgments.  Finally, we have argued
that the approach is flexible enough to support extensions to the type system, such as allowing the
use of improving substitutions in typing.  We conclude by identifying directions for future
work:
\begin{itemize}
\item Practical class systems are richer than the one used in this paper.  We would like to extend
  these results to fuller systems, including our prior work on instance chains.
%   In addition to
%   complicating the proofs of soundness for typing and improvement, this will require revisiting the
%   semantics of instances~\secref{ctxt-sem}.
\item Dictionary-passing provides both a semantics of overloading and an implementation technique.
  We would like to explore whether implementation techniques based on specialization can be used to
  compile practical languages.
% \item Finally, this approach does not enforce any sort of parametricity.  However, there is still
%   parametric behavior in \oml{}---as class methods are defined by finite collections of instances,
%   there can be only finitely many different behaviors in their semantics.  We hope to investigate
%   what parametricity results are possible for \oml{} and how they can be expressed in our semantics.
\item We claim that our approach avoids making distinctions between some observationally equivalent
  terms (such as in the polymorphic identity function example).  We would like to explore whether
  adequacy and full abstraction results for the underlying frame model can be extended to similar
  results for our semantics.
\item Our definition of equality provides $\eta$-equivalence; however, $\eta$ equivalence is not
  sound for Haskell.  We would like to explore either whether our approach can be adapted to a
  language without $\eta$-equivalence.
\end{itemize}

\paragraph{Acknowledgments.}

We would like to thank: Mark Jones for initially suggesting Ohori's semantics of ML polymorphism as
a basis for understanding overloading; Jim Hook for proposing the polymorphic identity function
example; and, Keiko Nakata for her helpful feedback on drafts of the paper.

\bibliographystyle{abbrvnat}
\bibliography{main}

\appendix

\end{document}